\newcommand{\bs}[1]{\boldsymbol{#1}}
\newtheorem{theorem}{Theorem}[section]
\title{Two-Impulse Trajectory Design in Two-Body Systems \\ With Riemannian Geometry}
\author{Samuel G. Gessow\footnote{Ph.D. Student, Department of Mechanical and Aerospace Engineering, VECTR Lab, sgessow@ucla.edu, AIAA Student Member.}, 
James Tseng\footnote{Graduate Student, Department of  Mechanical and Aerospace Engineering, VECTR Lab, jtseng22@ucla.edu, AIAA Student Member.}, 
Eden Zafran\footnote{Graduate Student, Department of  Mechanical and Aerospace Engineering, VECTR Lab, edenzafran@ucla.edu, AIAA Student Member.}, 
and Brett T. Lopez \footnote{Assistant Professor, Department of Mechanical and Aerospace Engineering, VECTR Lab, btlopez@ucla.edu, AIAA Member.}}
\affil{University of California Los Angeles, Los Angeles, CA, 90025}
\begin{document}

\maketitle

\begin{abstract}
This work presents a new method for generating impulsive trajectories in restricted two-body systems by leveraging Riemannian geometry.
The proposed method transforms the standard trajectory optimization problem into a purely geometric one that involves computing a set of geodesics for a suitable Riemannian metric.
This transformation is achieved by defining a metric, specifically the Jacobi metric, that embeds the dynamics directly into the metric, so any geodesic of the metric is also a dynamically feasible trajectory.
The method finds the fuel-optimal transfer trajectory by sampling candidate energy ($\Delta V$) changes for different points on the current and desired orbit, and efficiently computing and evaluating each candidate geodesic, which are equivalent to candidate orbit transfer trajectories via the Jacobi metric.
The method bypasses the known issues of optimization-based methods, e.g., sensitivity to the initial guess, and can be applied to more complex two-body systems.
The approach is demonstrated on the minimum-$\Delta V$ two-impulse phase-free orbit transfer problem, first on a Keplerian system and second on a system with a modeled $J_2$ perturbation. 
The proposed method is shown to meet or exceed the state-of-the-art methods in the minimum-$\Delta V$ problem in the Keplerian system.
The generality and versatility of the approach is demonstrated by seamlessly including the $J_2$ perturbation, a case that many existing methods cannot handle. 
Numerical simulations and performance comparisons showcase the effectiveness of the approach.
\end{abstract}

\section{Introduction}
\lettrine{F}{inding} fuel-optimal orbit transfer trajectories is a fundamental problem in space mission planning and is often expressed as finding a maneuver that minimizes the overall change in velocity ($\Delta V$) required by the spacecraft.
In classical trajectory design \cite{shirazi2018spacecraft,walsh2020survey}, it is assumed a spacecraft can generate instantaneous $\Delta V$ through an impulsive maneuver.
This reduces the orbit transfer problem to determining when, where, and by how much $\Delta V$ needs to change.
While modern trajectory design methods are now focused on continuous thrust maneuvers \cite{shirazi2018spacecraft}, impulsive trajectories are still used to provide a reasonable initial guess that is further refined by a numerical solver \cite{saloglu2025impulsive}.
Despite some of the first works on impulsive trajectory design dating back more than a century \cite{hohmann1925erreichbarkeit,mccue1963optimum}, and with some recent developments \cite{koblick2019robust}, algorithms for designing impulsive trajectories still often rely on numerical methods sensitive to the quality of the initial guess, leading to converge to local minima or complete diverge of the numerical solver \cite{betts1998survey}.
This work develops a new impulsive trajectory design framework that reformulates the trajectory design problem as a purely geometric curve finding problem.
In doing so, we bypass the sensitives observed in other methods while also obtaining a method that generalizes to more challenging/complex problems, such as non-spherical bodies.

Recently, there has been a growing interest and success in applying geometric concepts to trajectory planning and control \cite{kaptui2022fuel,adu2025bring}. 
In particular, new methods are being developed using geodesics, which are defined as the shortest path between two points on a manifold in Riemannian geometry \cite{jaquier2022riemannian}, to represent dynamically feasible trajectories. 
If this conversion from trajectories to geodesics can be made, then several efficient solvers for computing geodesics can be employed \cite{leung2017nonlinear,kaptui2022fuel,adu2025bring} that bypass the issues with numerical optimization in the time domain.
Fundamentally, geodesics are the shortest curve connecting any two points on a manifold given a distance function, i.e., metric.
To convert a trajectory planning problem into a geodesic computation problem, a metric must be found that naturally embeds the dynamics so that any geodesic is also a dynamically feasible trajectory.
Certain classes of systems have known metrics whose geodesics describe state trajectories, the most relevant for this work being Lagrangian systems with an accompanying Jacobi metric.
This work will utilize this connection between geometry and dynamically feasible trajectories to design impulsive spacecraft trajectories that minimize a performance index. 

The main contribution of this paper is a new geometric framework for computing minimum-$\Delta V$ two-impulse phase-free orbital transfers that generalizes easily to more complex gravity models while avoiding local minima and the need for an accurate initial guess.
Specifically, the method leverages geodesics of a Riemannian metric to compute dynamically feasible trajectories.
Central to the approach is the geometric formulation of the phase-free transfer problem, which enables efficient computation and circumvents the convergence issues common in traditional methods.
The key insight of this work is that spacecraft operating in a restricted two-body system possesses the so-called Jacobi metric whose geodesics are also dynamically feasible trajectories.
Combining this insight with a sampled-based planning algorithm allows the efficient computation of fuel optimal transfer trajectories even for more complex restricted two-body problems, such as those with a modeled $J_2$ perturbation.
Specifically, while prior work has addressed $J_2$-perturbed transfers through iterative corrections \cite{ma2017global,zhou2013optimal}, this framework directly incorporates gravitational perturbations such as $J_2$ into the metric itself, without relying on precomputed unperturbed solutions.
The method is demonstrated on both a classical Keplerian system, where it meets or exceeds the performance of existing techniques, and on a $J_2$ perturbed system, showcasing its flexibility.
More broadly, this approach lays the groundwork for extending geometric methods to increasingly complex mission environments.

\section{Differential Geometry Preliminaries}
We first give a brief review of essential concepts from differential geometry as it is fundamental to our approach.
Let $\mathbb{S}_+^n$ denote the set of $n\times n$ positive definite matrices. 
Furthermore, let $\mathcal{M}$ be a smooth manifold (which will be $\mathbb{R}^n$ for this work) equipped with a Riemannian metric $\mathcal G: \mathcal{M} \rightarrow \mathbb{S}_+^n$ that defines an inner product $\langle \cdot, \cdot \rangle_{\bs{x}}$ on the tangent space $T_{\bs{x}} \mathcal M$ at every point $\bs{x} \in \mathcal{M}$.
The Riemannian metric $\mathcal G(\bs{x})$ defines local geometric notions such as angles, length, and orthogonality at every point $\bs{x} \in \mathcal{M}$.
Given two points $\bs{p}_0,\,\bs{p}_f \in \mathcal{M}$, let $\bs{c}:[0,1] \rightarrow \mathcal{M}$ be a regular (i.e., $\partial \bs{c} / \partial s =\bs{c}_s \neq 0 ~  \forall s \in [0,1]$) parametrized differentiable curve such that $\bs{c}(0)=\bs{p}_0$ and $\bs{c}(1)=\bs{p}_f$.
The length $\mathcal{L}$ and Riemannian energy $\mathcal{E}$ of curve $\bs{c}$ are given by
\begin{equation}
    \mathcal{L}=\int_0^1 \sqrt{\bs{c}_s^\top \mathcal{G}(\bs{c})\bs{c}_s}ds, \ \ \mathcal{E}=\int_0^1{\bs{c}_s^\top \mathcal{G}(\bs{c})\bs{c}_s}ds.
\end{equation}
Let $\Xi(\bs{p}_0,\bs{p}_f)$ denote the family of regular curves with $\bs{c}(0)=\bs{p}_0$ and $\bs{c}(1)=\bs{p}_f$.
The Riemannian distance between points $\bs{p}_0$ and $\bs{p}_f$ is given by
\begin{equation}
\label{eq:geo_min}
    d(\bs{p}_0,\bs{p}_f) = \inf_{c(s)\in\Xi}\mathcal{L}(\bs{c}).
\end{equation}
By the Hopf-Rinow theorem, under suitable conditions a minimizing curve known as a \textit{minimum geodesic} $\bs{\gamma}:[0,1] \rightarrow \mathcal{M}$ is guaranteed to exist with the unique property $\mathcal{E}(\bs{\gamma})=\mathcal{L}(\bs{\gamma})^2 \leq \mathcal{L}(\bs{c})^2 \leq \mathcal{E}(\bs{c})$.
In other words, a minimum geodesic is the shortest curve between two points on a manifold.
A minimum geodesic can also be thought of as the extension of the concept of a straight line, which is the minimum geodesic in Euclidean space.
In addition to the minimum geodesic, any curve satisfying the geodesic equation
\begin{equation}
    \nabla_{\bs{\gamma}_s}\bs{\gamma}_s  = 0,
    \label{eq:geo}
\end{equation}
is a \textit{geodesic}, where $\nabla$ is the covariant derivative. The covariant derivative generalizes the directional derivative to spaces whose basis vectors change direction or length throughout the space \cite{do2016differential}.
\Cref{eq:geo} will be expressed in a more useful form (i.e., in coordinates) in \cref{sec:computing_geo}.
Note that any minimum geodesic is a geodesic, but a geodesic may not be minimal in the sense of \cref{eq:geo_min} because there can be several extremal curves that satisfy \cref{eq:geo}.
For the remainder of the paper, we will omit the ``minimal'' qualifier when discussing geodesics.

\section{Spacecraft Trajectories and Geometry}
Spacecraft trajectory design can be formulated in the time or spatial domain.
The time domain is often chosen because the equations of motion are almost always derived with time as the independent variable.
The primary numerical procedures for generating impulsive trajectories to transfer a spacecraft from position $\bs{p}_0$ to position $\bs{p}_f$ are the so-called indirect (shooting) methods \cite{betts1998survey}.
Indirect methods (see \cref{fig:shooting}) iteratively guess the instantaneous change in velocity ($\Delta V$) required to reach $\bs{p}_f$ from $\bs{p}_0$.
The $\Delta V$ guess is refined by computing the error in final position by integrating the dynamics forward in time given the current guess, then iterating to minimize the error.
However, indirect methods are known to be hypersensitive and require a good initial guess to perform reasonably \cite{betts1998survey}. 

An alternative formulation is to eliminate the time dependence and design the \emph{shape} of the trajectory connecting points $\bs{p}_0$ and $\bs{p}_f$, posing the trajectory design problem as a geometry problem of determining a minimum length curve, i.e., a minimal geodesic, that connects $\bs{p}_0$ and $\bs{p}_f$. 
The benefits of formulating the trajectory design problem in this way will be clarified later in the manuscript, but it is important to note here that one cannot completely bypass the dynamics of the system. 
Rather than express the dynamics as an explicit constraint that must be satisfied, in certain cases the dynamics can actually be encoded into the geometry of the problem. 
One way of encoding the dynamics is to select a metric $\mathcal{G}(\bs{q})$, with $\bs{q}$ being the configuration coordinates so that the geodesics of the metric are valid trajectories of the system.
Equipped with such a metric, the shooting problem of finding an initial condition can be converted into the geometric problem of finding a geodesic.
There are many methods to compute a geodesic, one method used in this paper relies on systematically deforming a curve into the geodesic as shown in \cref{fig:deform} and discussed more in \cref{sec:computing_geo}.
This method is more stable than shooting methods and will converge without a good initial guess.

For certain classes of systems, there exists a systematic way of constructing a Riemannian metric to capture the dynamics of the system called the Jacobi metric \cite{pin1975curvature,arnol2013mathematical}.
One class is Lagrangian systems of the form $L(\bs{q},\bs{\dot q})=\frac{1}{2} \bs{\dot q}^\top \mathcal Q(\bs{q})\bs{\dot q} -V(\bs{q})$, where $\mathcal{Q}$ is the mass tensor (and a Riemannian metric) and $V$ is the potential function.
Systems of this form are sometimes referred to as ``natural Lagrangian systems.''
Using the Mauritius principle \cite{pin1975curvature,yourgrau2012variational}, the Jacobi metric 
$\mathcal{G}(\bs{q})=(2E-V(\bs{q}))\mathcal{Q}(\bs{q})$ is derived such that geodesics of this metric describe motion of the system, where the derivation is shown in \ref{sec:appendix_derivation_jacobi}.
In other words, the geodesics of the Jacobi metric are paths that the system will follow if energy of the system is conserved.
The Jacobi metric for the standard Keplerian system is shown in \cref{sec:ex_kep}, and the metric for the system with a $J_2$ perturbation is shown in \cref{sec:J2}.
In addition, the other states of the trajectory, such as velocity, acceleration, and elapsed time, can all be computed from a geodesic once one is found.
For $s \in [0,1]$, the time along the trajectory is given as
\begin{equation}
t(s) = \int_{0}^{s} \frac{1}{2\bigl(E - V\bigl(\bs{q}(\sigma)\bigr)\bigr)} d\sigma.
\end{equation}
The rest of the states can be computed using $\bs{q}(t(s))$ and taking derivatives.

\begin{figure}[t!]
    \begin{subfigure}{.5\textwidth}
        \centering
        \includegraphics[width = \textwidth, trim={30 0 50 13}, clip,]{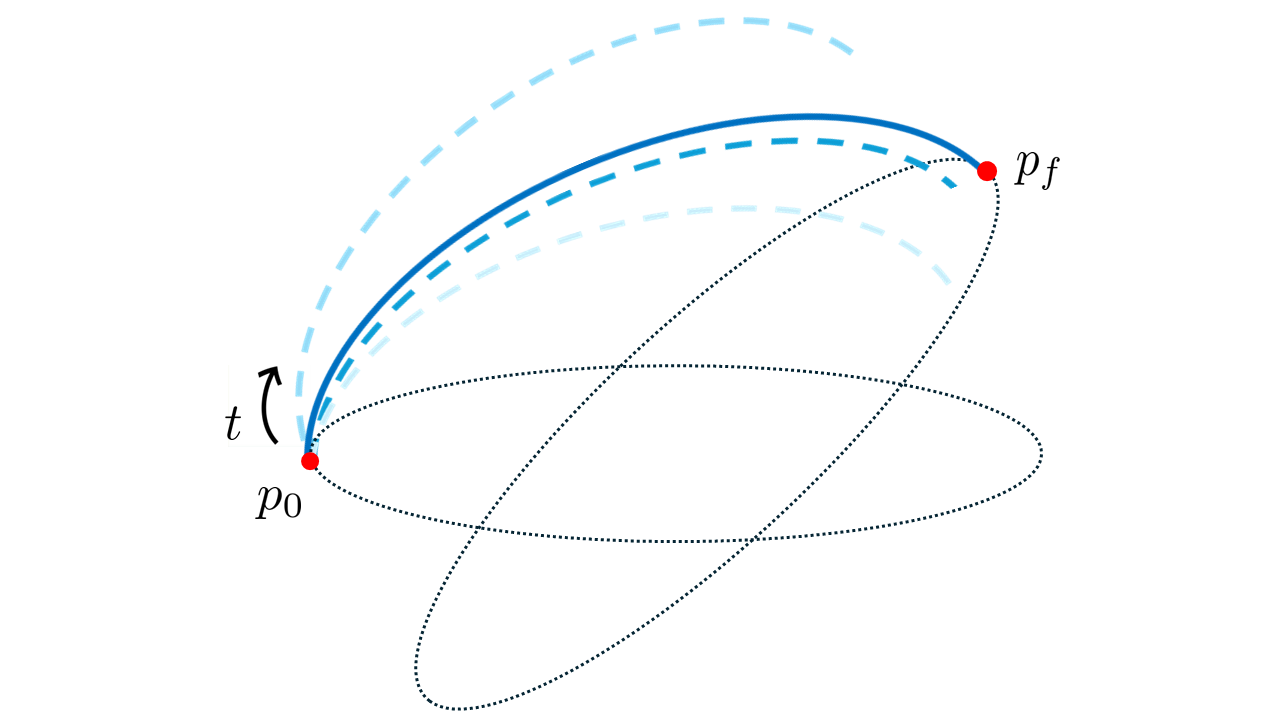}
             \caption{Shooting method.}
             \label{fig:shooting}
     \end{subfigure}
     \begin{subfigure}{.5\textwidth}
         \centering \includegraphics[width = \textwidth, trim={35 27 50 10}, clip]{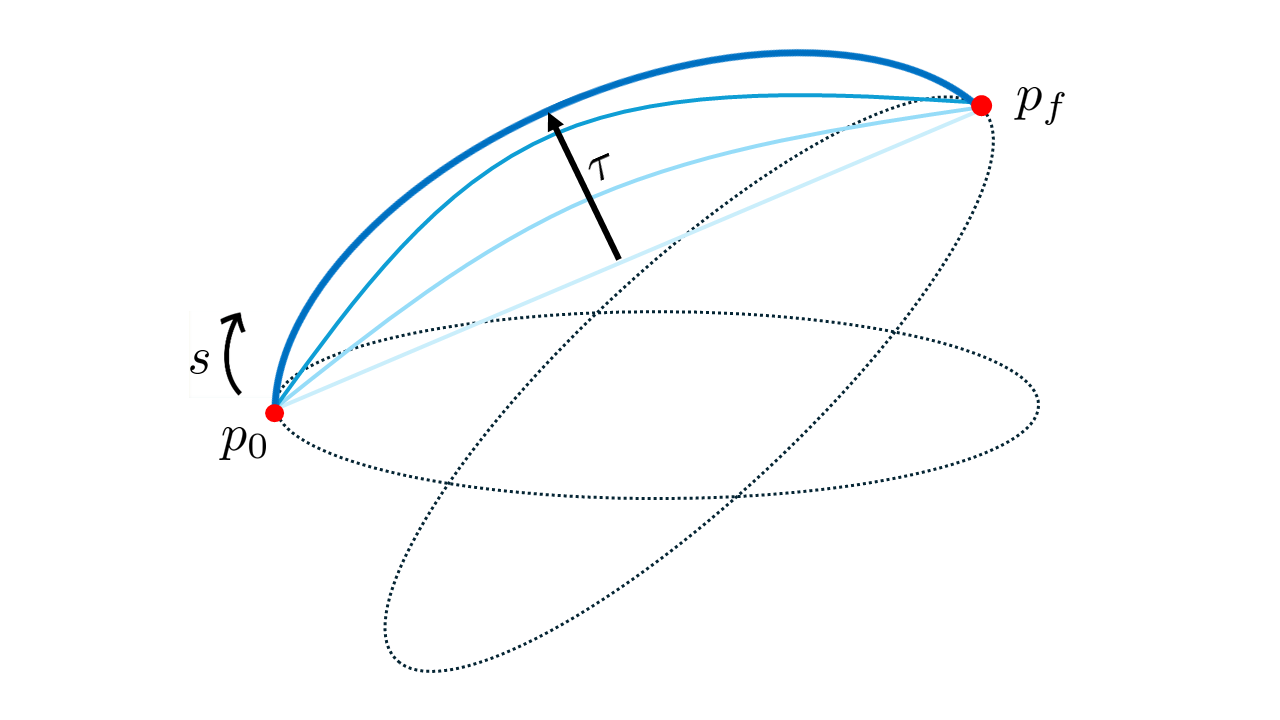}
         \caption{Curve deformation.}
         \label{fig:deform}
     \end{subfigure}
     \caption{Illustration of shooting method and curve deformation evolution.}
\end{figure}

\section{Algorithm: Sampling-Based Trajectory Design with Geodesics}
\label{sec:methods}
In this section, we present a new trajectory planning framework using geodesics for orbit transfer optimization given an arbitrary cost function. 
Although the geodesic of the Jacobi metric is the shortest path for the metric and a feasible path for the system, it is not necessarily the optimal trajectory for an arbitrary cost function. 
This can be seen when the cost function is minimizing $\Delta V$ for the orbital transfer problem, as the vector initial and final velocities of the geodesic are non-linear functions of the Riemannian energy, $E$, and are not minimized when $E$ is the lowest. 
To optimize for this cost function, we must therefore search over multiple geodesics of varying $E$ values and compute the resulting $\Delta V$.

Designing transfer trajectories using Riemannian geometry has several key steps, all of which are summarized in \cref{algorithm}.
Given a suitable Riemannian metric (in this paper, the Jacobi metric) and performance index, e.g., minimum $\Delta V$, one must compute a sample set that can be used to compute geodesics.
For the Jacobi metric, three quantities are needed to compute a geodesic, namely the initial position $\bs{p}_0$, final position $\bs{p}_f$, and energy level $E$ of the path. 
For a two-impulse orbital transfer problem, sampling over the starting point on the initial orbit, final point on the target orbit, and energy yields a collection of transfer orbits of varying energies between two points specified on the initial and target orbit. 
Illustrated in \cref{fig:orbit_sample}, this shows a many-to-one sampling in position for three energies, which is expanded to many-to-many sampling in the actual algorithm we used.
After computing the geodesics for all sampled parameters, the results are searched over to find the geodesic that best minimizes the cost function. 
Because of the nature of sampling, the true minimum is likely not included in the sampling, but the optimal with respect to the sampled set is guaranteed to be found.
The computed solution can be refined using a narrower but denser sample set, or can be used to warm start an optimization method.
We expand on each step in \cref{algorithm} in the following subsections.

\begin{figure}[t!]
    \centering
    \includegraphics[height=.3\textheight, trim={130 60 40 70}, clip,]{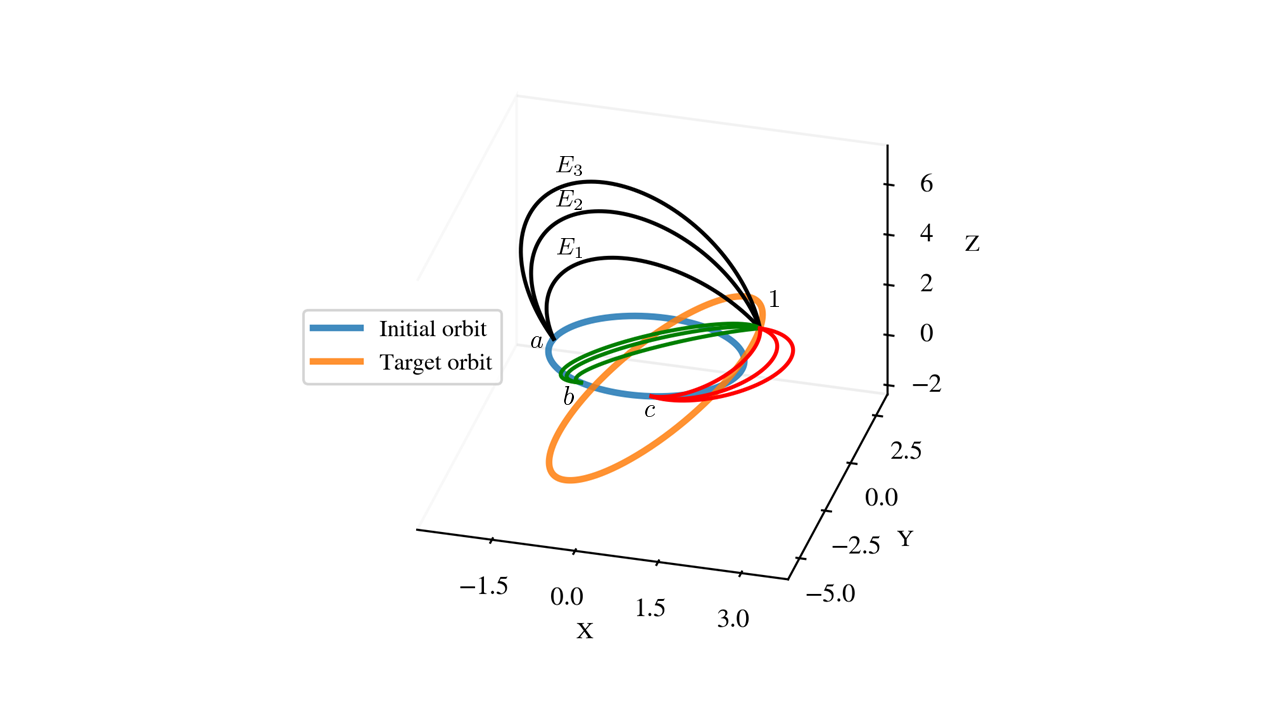}
    \caption{Sampling over position of initial ($a$, $b$, $c$) and target ($1$) orbits \& transfer orbit energies $E_1$, $E_2$, $E_3$.}
    \label{fig:orbit_sample}
\end{figure}

\begin{algorithm}[t]

\caption{Geodesic Trajectory Planning} \label{algorithm}
\small \SetAlgoLined

\KwIn{$\mathcal{G} \gets$ Jacobi metric, \ $\mathcal{J} \gets$ cost function}
\KwOut{$\mathbf{X}_{\text{opt}}$ optimal trajectory}

\BlankLine

\footnotesize {\tcp{sample parameters and compute geodesics}}
$\mathcal{A}, \mathcal{B}, \mathcal{C}  \gets$ getSampleParameters() \\
\For{$\bs{p}_0 \in \mathcal{A}$} {
    \For{$\bs{p}_f \in \mathcal{B}$} {
        \For{ $E \in \mathcal{C}$} {
            $\mathbf{X}_{\bs{p}_0,\bs{p}_f,E} \gets$ computeGeodesic($\mathcal{G},\bs{p}_0,\bs{p}_f,E$) \\
        }
    }
}

\footnotesize {\tcp{search over and evaluate geodesics using cost function}}
$\mathbf{X}_{\text{best}} \gets$ evaluateGeodesics($\mathcal{J},\mathbf{X}$)

\footnotesize {\tcp{refine solution using optimization}}
$\mathbf{X}_{\text{opt}} \gets$ refineSolution($\mathcal{G}, \mathcal{J},\mathbf{X}_{\text{best}}$)

\end{algorithm}

\subsection{Sampling}
Geodesics for a given Jacobi metric do not directly minimize a prescribed cost function, so it is necessary to sample three parameters in the orbit transfer problem: the position on the initial orbit $\bs{p}_0$, the position on the target orbit $\bs{p}_f$, and the energy value of the geodesic $E$ to find the best geodesic for the objective.
In this paper, we uniformly sampled the position coordinate, defined in a Cartesian frame $xyz$. 
Establishing appropriate bounds on the energy parameter $E$ is essential for efficient sampling.
Low energy levels may cause disconnected regions in the configuration space \cite{koon2000dynamical}, corresponding to non-existent geodesics connecting certain initial and final points. 
Moreover, excessively large energy levels can also waste computation resources as the resulting geodesics require large $\Delta V$, making them unlikely to be useful for the objective. 
In certain cases, such as the one explored in \cref{sec:ex_kep}, bounds on the minimum energy can be derived.
In general, loose bounds can often be obtained and used with minimal wasted computation accrued when no geodesic exists.

One method to improve performance is a coarse-to-fine sampling procedure, where the geodesic trajectory planning algorithm is run with coarse sampling over positions $xyz$, then rerun with finer sampling in a region around the optimum found under the coarse sampling regime.
This method is successful if the basin around the minimum is large, where the coarse sampling will pick a point in the correct basin.
This naturally leads to a more general optimization scheme for refining the sampled solution, explored more in \cref{sec:refine}.
Effective sampling is key to performance, as the overall method will only perform well if the sampled set includes the (near-)optimal element.

\subsection{Computing Geodesics}
\label{sec:computing_geo}
Many methods have been proposed to compute geodesics for a Riemannian metric \cite{leung2017nonlinear}. 
A recent approach, particularly in the context of motion planning, is to solve the geodesic as the solution to a geometric heat flow partial differential equation (PDE) \cite{chou2001curve, liu2019affine}. 
In this method, the geodesic is posed as the steady-state solution of a PDE formulated in terms of the Christoffel symbols of the metric.
The geometric heat flow PDE for finding geodesics from \cite{jost2008riemannian} and later used in \cite{liu2019affine} is given by
\begin{equation}
\frac{\partial \bs{c}(s, \tau)}{\partial \tau} = \nabla_{\bs{c}_s(s, \tau)} \bs{c}_s(s, \tau),
\label{eq:pde}
\end{equation}
where $\bs{c}(s, \tau)$ is a family of curves parameterized by $s$ and an additional dummy variable $\tau$, ${\bs{c}_s}(s,\tau) = \frac{\partial \bs{c}}{\partial s}$ is the derivative with respect to $s$, and $\nabla$ is the covariant derivative.
This equation can also be seen as a PDE extension of \cref{eq:geo}. 
This PDE can be solved using a traditional PDE solver or by collocation methods.
For the example in \cref{sec:J2}, a pseudospectral approach with Chebyshev polynomials was implemented to solve \cref{eq:pde}.

In coordinates, \cref{eq:pde} expands to
\begin{equation}
\frac{\partial c^i}{\partial \tau} = \frac{\partial^2 c^i}{\partial s^2} + \Gamma^i_{jk}(c) \frac{\partial c^j}{\partial s} \frac{\partial c^k}{\partial s},
\label{eq:pde_christoffel}
\end{equation}
where we employ Einstein notation\footnote{Einstein notation follows the rule that whenever an index appears once up and once down it is implicitly summed. For example $a_i b^i = \sum_i a_i b^i$.} and $c^i$ is the $i^\text{th}$ component of $\bs{c}$.
$\Gamma^i_{jk}$ are the Christoffel symbols of the metric, defined by the following equation
\begin{equation}
\Gamma^i_{jk} = \frac{1}{2} g^{il} \left( \frac{\partial g_{lj}}{\partial c^k} + \frac{\partial g_{lk}}{\partial c^j} - \frac{\partial g_{jk}}{\partial c^l} \right),
\label{eq:christoffel_def}
\end{equation}
with the Riemannian metric  $g_{ij}$ and its inverse $g^{ij}$, where $g_{ij}$ is the $i,j$ component of $\mathcal{G}$.  
All terms in these equations depend only on the metric $\mathcal{G}$ and its derivatives, and are easily computed once $\mathcal{G}$ is defined.
This method of computing geodesics has theoretical guarantees of existence and can still converge even if the initial guess is far from the optimal. 
Specifically, on a compact Riemannian manifold, every homotopy class of closed curves contains a geodesic, and this result can extend to geodesically complete manifolds \cite{jost2008riemannian}.
Even on non-compact and non-geodesically complete manifolds, convergence failures of the heat flow-based method are typically predictable (e.g. the initial curve goes through a singularity) and can often be mitigated by perturbing the initial curve.
In some situations, such as the restricted two-body problem with Keplerian orbits, geodesics can instead be computed more efficiently by exploiting the known structure of the geometry of the space, which is explored further in \cref{sec:ex_kep}.

\subsection{Geodesic Evaluation \& Selection}
Once the geodesics have been computed for all the sampled parameters, they can be evaluated with respect to the performance index, with the best selected for execution. 
The cost function in this framework is arbitrary and can even be changed online, since cost evaluation is typically computationally inexpensive.
For example, if a solution with excessive time of flight is found in the minimum $\Delta V$ transfer trajectory determination problem, the cost function can be adjusted to one that includes a penalty or bound on time of flight.
For the orbit transfer problems considered in this work, i.e., two-impulse trajectories in a two-body system, geodesic evaluation and selection is essentially a fast enumeration method.
Future work will expand this baseline algorithm to conduct a search over geodesics for multi-impulse maneuvers in multi-body systems.

\subsection{Refining the Solution}
\label{sec:refine}
As mentioned previously, the computed trajectory is only as good as the sampling density around the optimum.
To address this issue, the best solution or $n$-best solutions can be further refined via optimization or denser sampling over a narrower parameter range.
This further refinement is beneficial as it gets the benefits of sampling, namely, robustness to local minima, without the drawback of missing the true optimal. 
This work does not focus on improving optimization-based algorithms used for refining the solution, but we did demonstrate the efficacy of combining some off the shelf optimization methods into our proposed framework.

\section{Simulation Results}
\label{sec:results}
The trajectory planning algorithm outlined in \cref{sec:methods} is evaluated on two test cases. 
The first test case of the minimum $\Delta V$ phase-free orbital transfer in the Keplerian system demonstrates that our proposed method matches or exceeds existing methods on a well studied problem, thereby validating the approach.
The second test case of the minimum $\Delta V$ phase-free transfer considering the $J_2$ perturbation demonstrates the proposed method on a problem that existing methods cannot easily handle.

\subsection{Two-Impulse Phase-Free Transfer Maneuvers in Keplerian Orbits}
\label{sec:ex_kep}
We first demonstrate the proposed framework for solving for a minimum-$\Delta V$ two-impulse phase-free maneuver in a Keplerian two-body system. This is a well studied problem, with analytical solutions such as the Hohmann or bi-elliptic transfer in planar cases. Recently, \cite{koblick2019robust} proposed a method to systematically compute solutions for this problem.
Although this problem has been well studied, it is a useful scenario to demonstrate the proposed method and derive the results of \cite{koblick2019robust} from a geometric perspective.

We consider a Keplerian two-body system, consisting of a primary body of mass $M$ and a secondary body of mass $m$, where $ M \gg m $ so that the primary is treated as stationary at the origin. 
The motion of the secondary body is governed by Newtonian gravitation, leading to the classical Kepler problem.

The equations of motion for the secondary body are
\begin{equation}
\ddot{\bs{r}} = -\frac{GM}{r^3} \bs{r},
\end{equation}
where \( \bs{r} \) is the position vector of the secondary body relative to the primary, \( r = \|\bs{r}\| \) is the distance between the two bodies, and $G$ is the universal gravitational constant.

The goal of the minimum $\Delta V$ problem is to minimize the sum of the velocity impulse vectors that transfer a spacecraft from its initial orbit at position $\bs{p}_0$ with an impulsive change in velocity $\Delta \bs{V}_0$ onto the transfer orbit, and, ultimately, to the target orbit at position $\bs{p}_f$ with a second impulsive change in velocity $\Delta \bs{V}_f$, given by
\begin{equation}
\Delta V =\min \Bigl (||\Delta \bs{V}_0|| + ||\Delta \bs{V}_f|| \Bigr).
\end{equation}

In a Keplerian system, given the mechanical energy $E$ of the spacecraft, the Jacobi metric $\mathcal{G}$ is defined by
\begin{equation}
\mathcal{G}(r) = 2\left(E + \frac{GM}{r}\right) I,
\end{equation}
where $I$ is the $3 \times3$ identity matrix. 
The geodesics of this metric correspond exactly to the trajectories of the particle under the gravitational field at a fixed energy $E$. In this case, we know that the Keplerian trajectories are ellipses when $E<0$ based on the derivation of the Jacobi metric (see \cref{sec:appendix_derivation_jacobi}); thus we know that the geodesics are also ellipses. 
Using this, we can circumvent the lengthy process of determining the geodesics through the heat flow PDE with Christoffel symbols and instead solve a simple geometry problem. 
Namely, the geodesic can be computed by finding an ellipse through the given points $\bs{p}_0, \bs{p}_f$ with one focus at the primary body, in the plane defined by $\bs{p}_0, \bs{p}_f$ and the primary body with a free parameter of the semi-major axis $a$. We can relate $a$ to the energy of the orbit $E$ with
\begin{equation}E=\frac{-GM}{2a}.\end{equation}

To bound our sampling approach, we determine a lower bound on the energy $E$ by finding the minimum-energy ellipse that intersects the two points $\bs{p}_0$ and $\bs{p}_f$ \cite{mcclain2001fundamentals}
The semi-major axis $a_{min}$ for the minimum-energy ellipse is 
\begin{equation}
a_{\min}(\bs{p}_0, \bs{p}_f) = \frac{1}{4} \left( \|\bs{p}_0\| + \|\bs{p}_f\| + \|\bs{p}_0 - \bs{p}_f\| \right). 
\end{equation}

Based on this construction, the minimum-$\Delta V$ problem for a specified pair of points $\bs{p}_0, \bs{p}_f$ can be reduced to an optimization over a single parameter: $E$. One could instead explicitly formulate a function of $E$ and compute its minimum, much like the approach taken in \cite{koblick2019robust}, which was parameterized over the angular momentum of the transfer orbit. 
We use this example primarily to illustrate the operation of our general framework: by sampling and refinement alone, we can recover comparable results without requiring an explicit function for optimization. 
This strategy naturally extends to more complex settings where the geodesics are sufficiently complicated such that no simple one-parameter optimization can be formulated, such as the example explored in \cref{sec:J2}.

To improve performance without excessively fine sampling, we implemented a general coarse-to-fine algorithm. With an initial evenly spaced coarse sampling of $\bs{p}_0$, $\bs{p}_f$ across both the initial and target orbits coupled with a sparse sampling in the transfer orbit energy $E$, we ensure sampling coverage in all possible basins that may contain minima, and prevent convergence to a local minimum. We then take a select few transfer trajectories of the smallest $\Delta V$ as ``best candidates'' to initialize an evolutionary algorithm that refines the candidates to obtain the optimal transfer trajectory.

We implemented the procedures described above in Python. We used the PyGMO library from the European Space Agency \cite{biscani2020pagmo} as the evolutionary algorithm, which has also been used for trajectory optimization in the PyKEP project \cite{izzo2012pygmo}. We performed some light tuning on the hyperparameters, used in all examples below, to achieve roughly a 30-second problem solve runtime. Notable parameters are given in \cref{tab:tuneparams}. For 90 position samples per orbit with 3 energy levels sampled each, repeated in both the prograde and retrograde transfer orbit directions, this yields $48,600$ ellipse samples to evaluate in the coarse sampling stage of the algorithm.

\begin{table}[h!]
\centering
\caption{Tuned hyperparameters of coarse-to-fine algorithm using ellipse.}
\begin{tabular}{l r}
\toprule
\toprule
Parameter & Value \\
\hline
\multicolumn{2}{c}{Coarse sample stage}\\
\hline
Number of position samples per orbit & 90 \\
Number of energy level samples per position sample & 3 \\
Number of best to save for refinement & 5 \\
Multiplier on $a_{min}$ for upper bound & 2 \\
\hline
\multicolumn{2}{c}{Evolutionary algorithm refinement stage}\\
\hline
Generations of evolution & 1000 \\
Optimization algorithm & CMA-ES \\
Convergence tolerance & $10^{-12}$ \\
Use generalized monotonic basin hopping (MBH) & True \\
MBH maximum step size & 0.05 \\
\hline
\end{tabular}
\label{tab:tuneparams}
\end{table}

\begin{table}[h!]
\centering
\caption{Tuned hyperparameters of refine-only algorithm using ellipse.}
\begin{tabular}{l r}
\toprule
\toprule
Parameter & Value \\
\hline
\multicolumn{2}{c}{Evolutionary algorithm refinement stage}\\
\hline
Population size to evolve & 50 \\
Generations of evolution & 5000 \\
Optimization algorithm & CMA-ES \\
Convergence tolerance & $10^{-12}$ \\
Use generalized MBH & False \\
\hline
\end{tabular}
\label{tab:tuneparams_gmo}
\end{table}

With this approach, we demonstrate our results on some benchmark examples, with a Low Earth Orbit (LEO) to a Highly Eccentric Orbit (HEO) transfer and Geosynchronous Transfer Orbit (GTO) to Retrograde Geosynchronous Orbit (RGEO) transfer from \cite{koblick2019robust}, and the Earth to asteroid Dionysus problem from \cite{saloglu2023existence}. 

We also compare the solution computed from the elliptical geodesic against the solution computed from the geometric heat-flow PDE to verify the PDE solver used in \cref{sec:J2} for the $J_2$ perturbation case. To manage numerical complexity and computation time of the PDE solver, tolerances and the number of refinement iterations were reduced, with these parameters shown in \cref{tab:tuneparams_J2}. Since we introduce more numerical calculations using a Chebyshev polynomial to approximate the trajectory and to solve a PDE, we expect our final result to perform worse than the direct ellipse calculations for the Keplerian case. However, in other cases without an analytically parameterized solution, this approximation is the best that we can achieve up to the specified convergence tolerance criterion.

Further, we show results implementing a refine-only algorithm that does not initially perform the coarse sampling stage, with notable tuned parameters in \cref{tab:tuneparams_gmo}. In the refine-only algorithm, the initial population is constructed using an evenly spaced position sample space then computing the minimum-energy ellipse between the two sampled points.

We also directly compare the results against two algorithms that can solve the general two-impulse phase-free Keplerian maneuvers. First is an implementation of the algorithm as from Koblick et al. (\verb|semi-analytic|) as presented in \cite{koblick2019robust} and written in Python, which may have a worse runtime performance compared to the original implementation (however, the authors did not explicitly note any run times). Second is the \verb|pl2pl_N_impulses| optimization class from PyKEP (\verb|pykep-pl2pl|)\cite{izzo2012pygmo}, which we have tuned to increase solver robustness. The tuned parameters and implementation notes are elaborated in the Appendix \ref{sec:appendix_other_parameters}. Using the coarse-to-fine algorithm we are able to match or exceed the reported optimal minimum $\Delta V$ presented for each problem from the previous publications.

\subsubsection{Low Earth Orbit (LEO) to a Highly Eccentric Orbit (HEO) Transfer}
\label{example:LEO to HEO}

First, we compare our algorithm with a LEO to HEO orbit transfer of the ALSAT 1 and ARIANE 44 satellites. The initial and target orbits are defined by their position and velocity vectors, which are computed from the orbital elements given in \cite{koblick2019robust}, shown in \cref{tab:ex1_params}).

\begin{table}[h!]
\centering
\caption{Orbit definitions of ALSAT 1 (initial orbit) and ARIANE 44 (target orbit).}
\begin{tabular}{l r r r}
\toprule
\toprule
Parameter & X & Y & Z \\
\hline
Initial position $r_i$ (km) & 3449.16114893 & -2063.72624968 & 5808.89565173 \\
Initial velocity $v_i$ (km/s) & 4.19600114 & -4.65510855 & -4.14528944 \\
\hline
Target position $r_f$ (km) & 7132.67709309 & 644.58087289 & -698.32594990 \\
Target velocity $v_f$ (km/s) & -0.91780300 & 9.52351726 & -0.58384682 \\
\hline
\end{tabular}
\label{tab:ex1_params}
\end{table}

A visualization of the found phase-free minimum-$\Delta V$ transfer solution using the coarse-to-fine algorithm is shown in \cref{fig:ex1_traj}, where the two velocity impulse vectors are shown with black arrows. 
The results of the optimal solution for each algorithm are given in \cref{tab:ex1_results} where the $\Delta V$ of the proposed method is the lowest.

\begin{table}[h!]
\centering
\caption{Results of minimum-$\Delta V$ solution of LEO to HEO transfer.}
\begin{tabular}{l r r r}
\toprule
\toprule
Algorithm & Method type & Total $\Delta V$ (km/s) & Compute time (s) \\
\hline
Coarse-to-fine (ellipse) & Sampling \& optimization & 6.552653 & 24.470 \\
Coarse-to-fine (PDE) & Sampling \& optimization & 6.606830 & 11850.0 \\
Refine-only (ellipse) & Optimization & 6.552653 & 8.381 \\ 
\hline 
\verb|semi-analytic| \cite{koblick2019robust} & Sampling & 6.552675 & 28530.0 
\tablefootnote{Using the same parameters as presented in \cite{koblick2019robust}, including discrete grid search of $0.1^\circ$ and a bisection convergence tolerance of $10^{-12}$. Our implementation of their algorithm in Python is likely much slower than the original, but the authors did not state any explicit run times for comparison.}\\
\verb|pykep-pl2pl| \cite{izzo2012pygmo} & Optimization & 6.612453 & 0.444 \tablefootnote{PyKEP has the majority of its software package implemented in C++, thus is able to achieve significantly faster run times.}\\

\hline
\end{tabular}
\label{tab:ex1_results}
\end{table}

The total change in velocity previously reported for this transfer was $\Delta V =  6.552674$ km/s from \cite{koblick2019robust}, for which this algorithm used a very fine discrete grid search of $0.1^\circ$ spacing along the orbit true anomaly, requiring $12,960,000$ vector pairs of positions to sample over. As seen in \cref{tab:ex1_results}, we were able to find a solution with a $\Delta V$ that is $2.2\times 10^{-5}$ km/s lower while requiring significantly fewer sample evaluations.

We further examine the effectiveness of the coarse-to-fine algorithm from the contour plots of the optimal total-$\Delta V$ for each combination of true anomalies from the initial orbit to the target orbit in \cref{fig:ex1_contour}. For a direct comparison, the total-$\Delta V$ at each point in the contour is computed using the procedure as described in \cite{koblick2019robust}. In the 2D plot on the right of \cref{fig:all_contours}, a grid is overlaid to indicate the locations where the coarse stage of the algorithm sampled (at the intersections of each line). Further, the 5 best samples that were passed into the refinement stage are plotted with white dots. Lastly, the optimal solution is marked with a red-outlined white star on both contour plots. We can observe in \cref{fig:ex1_contour} that the coarse sampling is still dense enough to cover all features of the non-convex contour, and that the best sample candidates were in the correct basin for finding the global minimum.
\begin{figure}[t!]
    \begin{subfigure}{.3\textwidth}
         \centering
         \includegraphics[width=\textwidth, trim={100 35 75 40},clip]{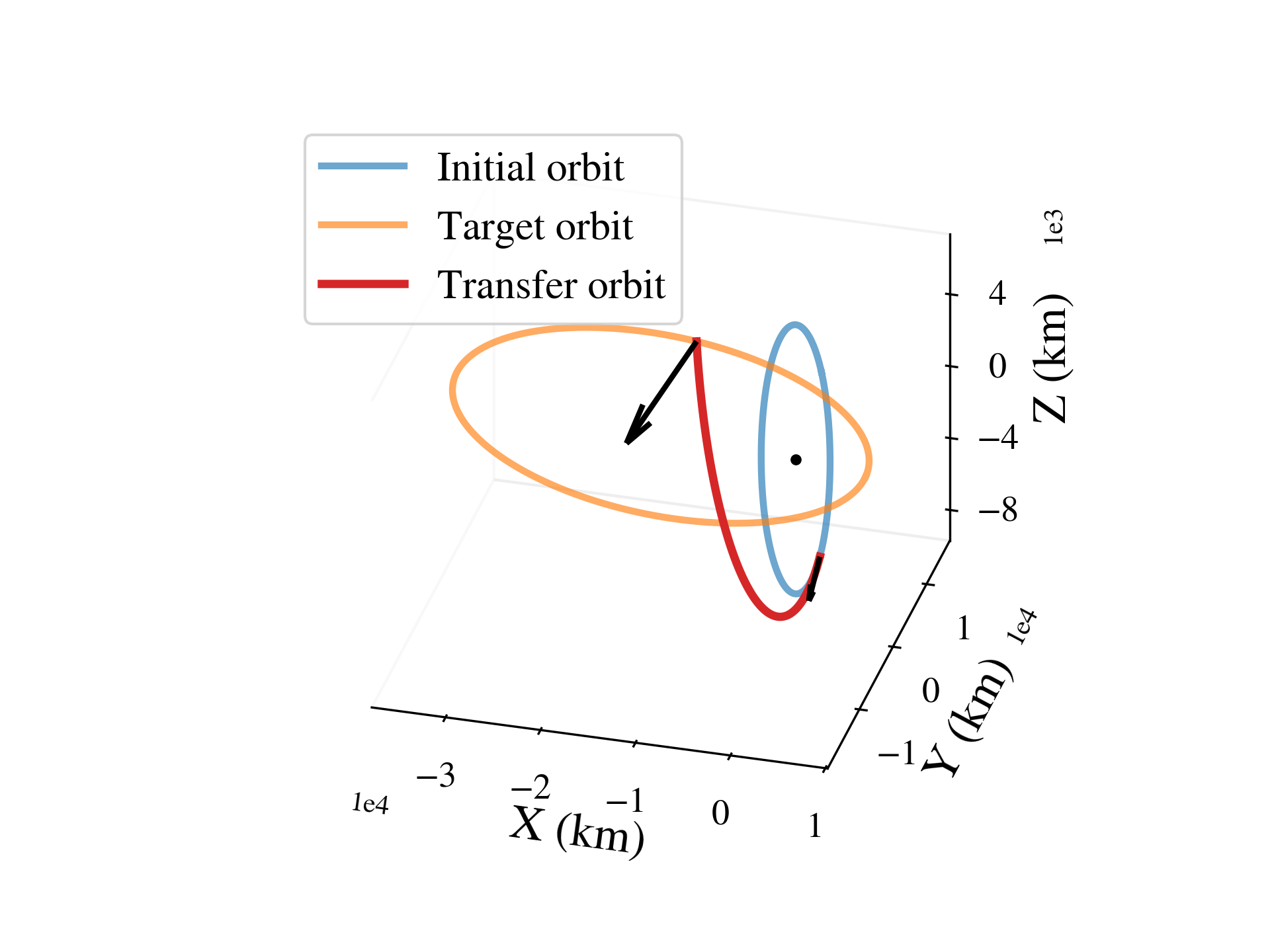}
         \caption{LEO to HEO transfer.}
         \label{fig:ex1_traj}
     \end{subfigure}
     \hfill{}
     \begin{subfigure}{.3\textwidth}
         \centering
         \includegraphics[width=\textwidth, trim={100 35 75 40},clip]{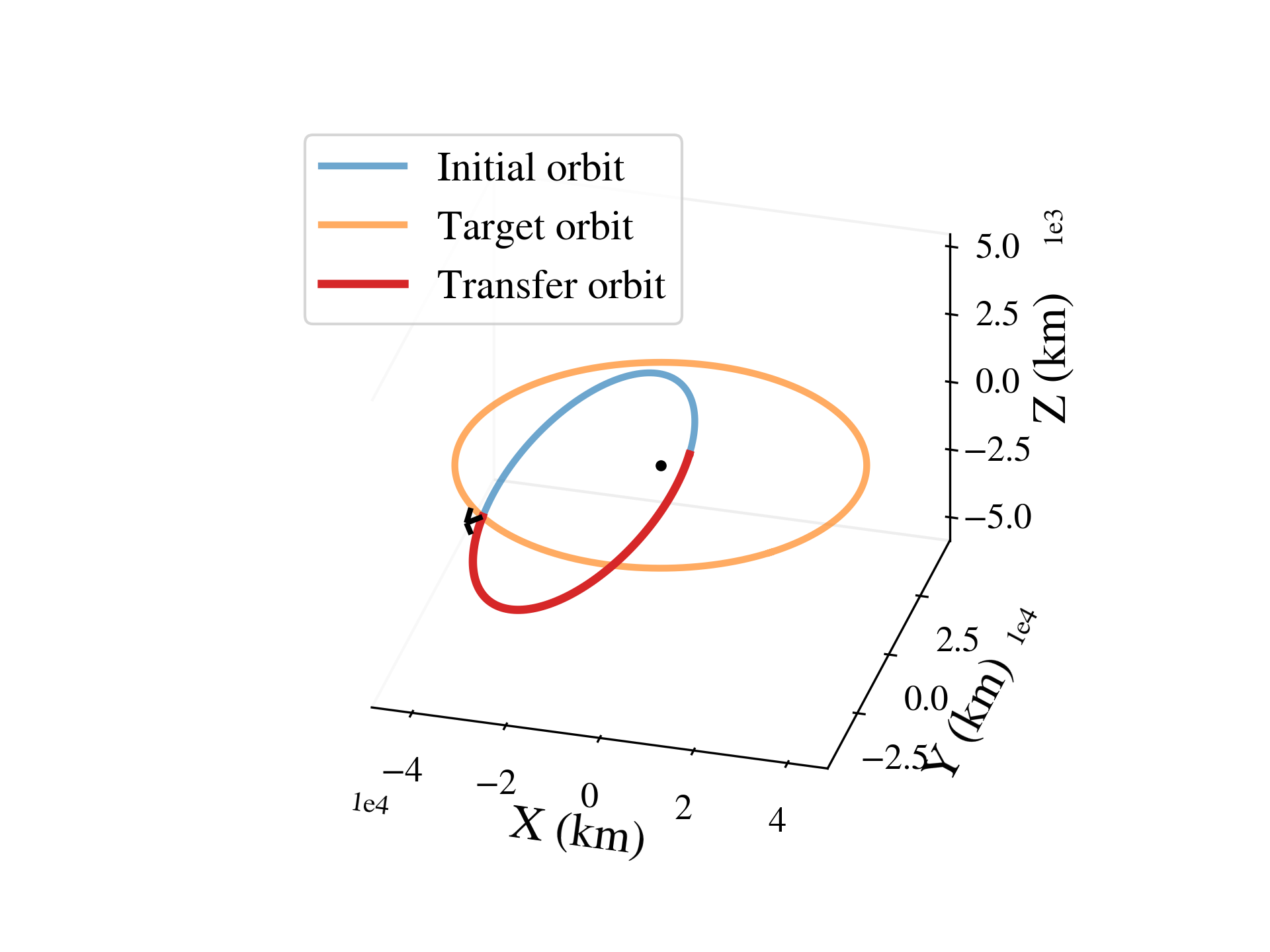}
         \caption{GTO to RGEO transfer.}
         \label{fig:ex2_traj}
     \end{subfigure}
     \hfill{}
     \begin{subfigure}{.3\textwidth}
         \centering
         \includegraphics[width=\textwidth, trim={100 35 75 40},clip]{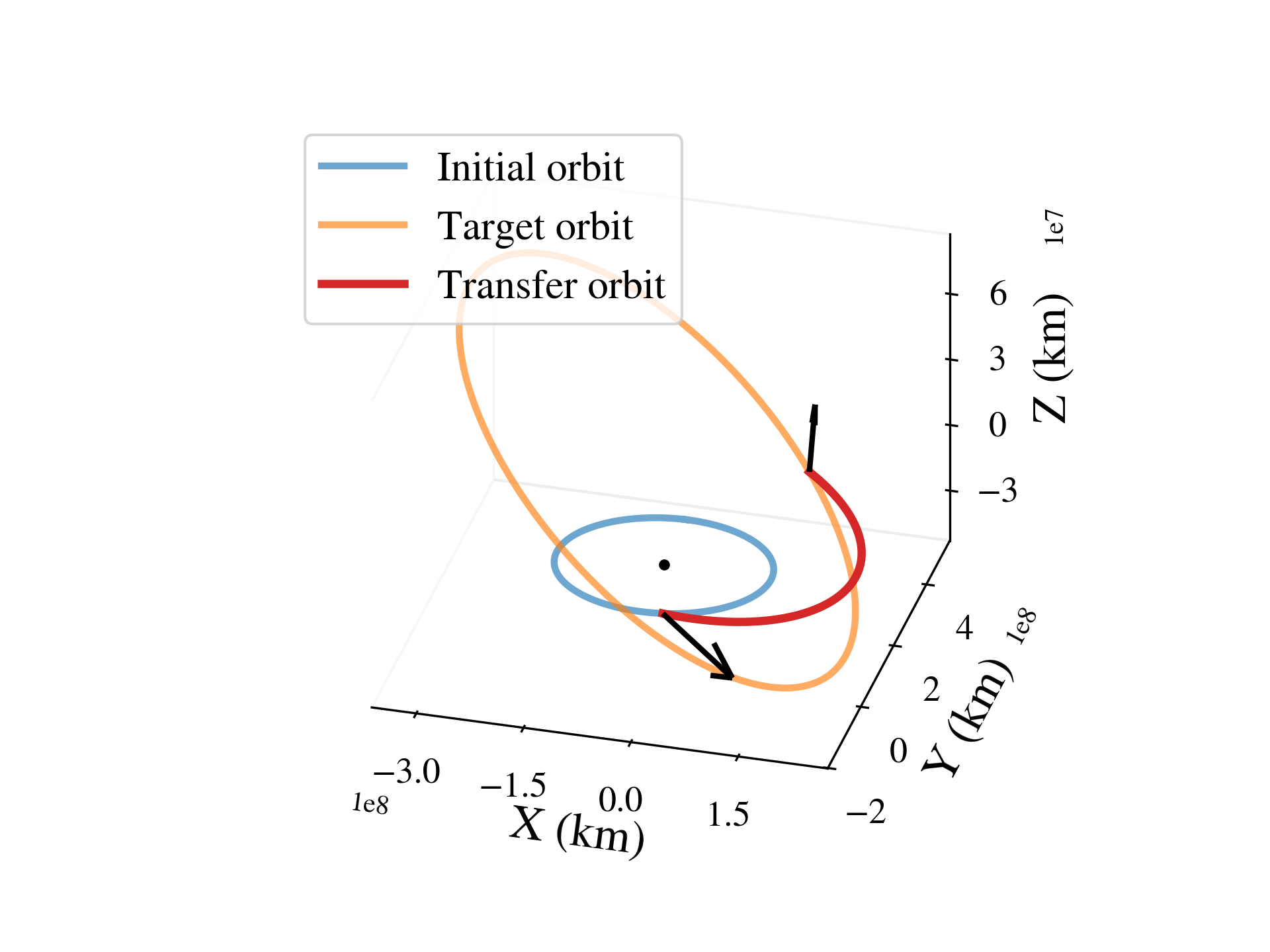}
         \caption{Earth to Dionysus transfer}
         \label{fig:ex3_traj}
     \end{subfigure}
     \caption{Transfer trajectories of Examples 1 - 3.}
     \label{fig:states}
\end{figure}

\begin{figure}
    \begin{subfigure}{.45\textwidth}
         \centering
         \includegraphics[height=.3\textheight, trim={60 0 75 40},clip]{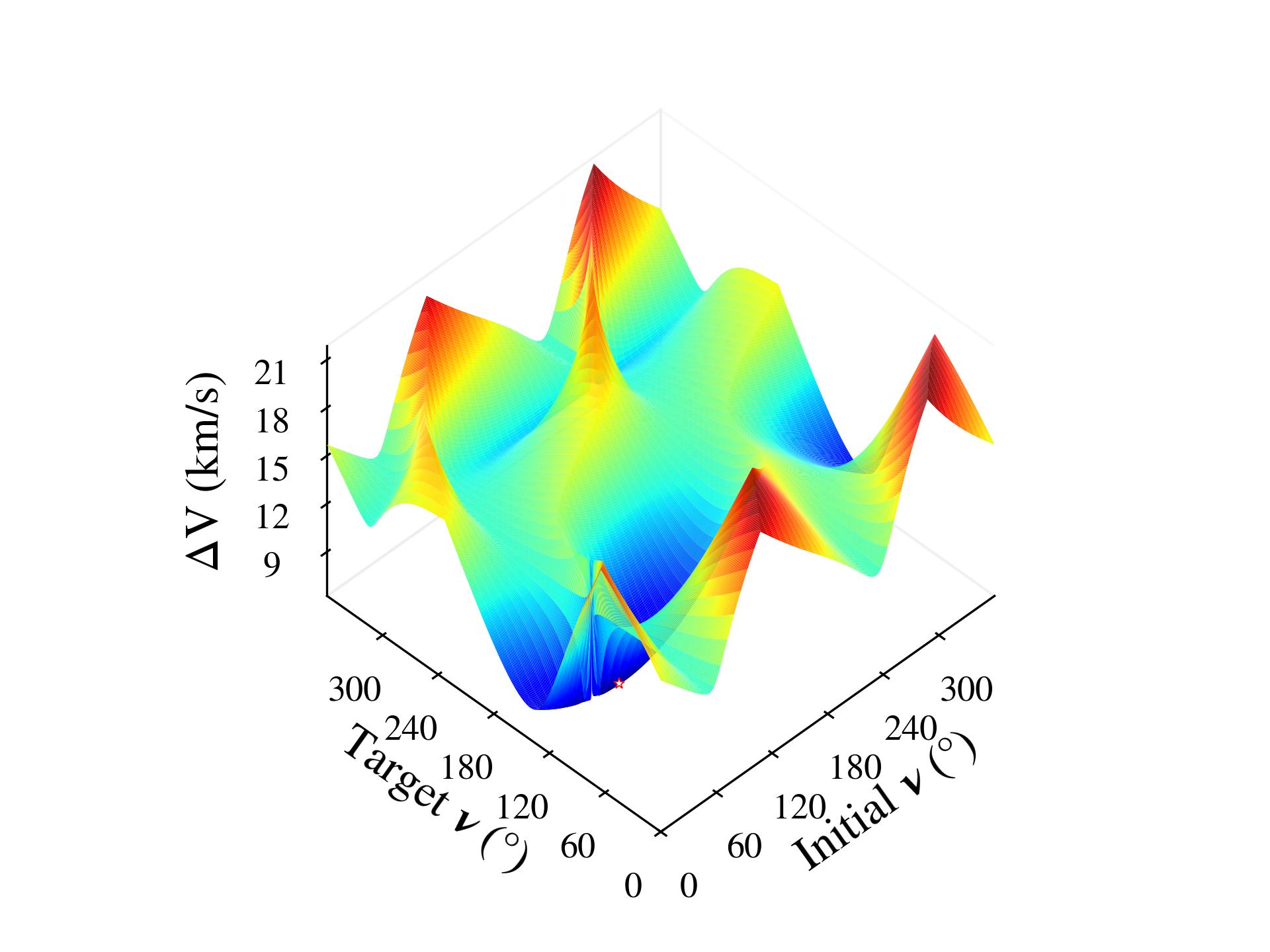}
         \caption{LEO to HEO.}
         \label{fig:ex1_contour3d.png}
     \end{subfigure}
     \hfill{}
     \begin{subfigure}{.45\textwidth}
         \centering
         \includegraphics[height=.3\textheight, trim={57 0 37 20}, clip]{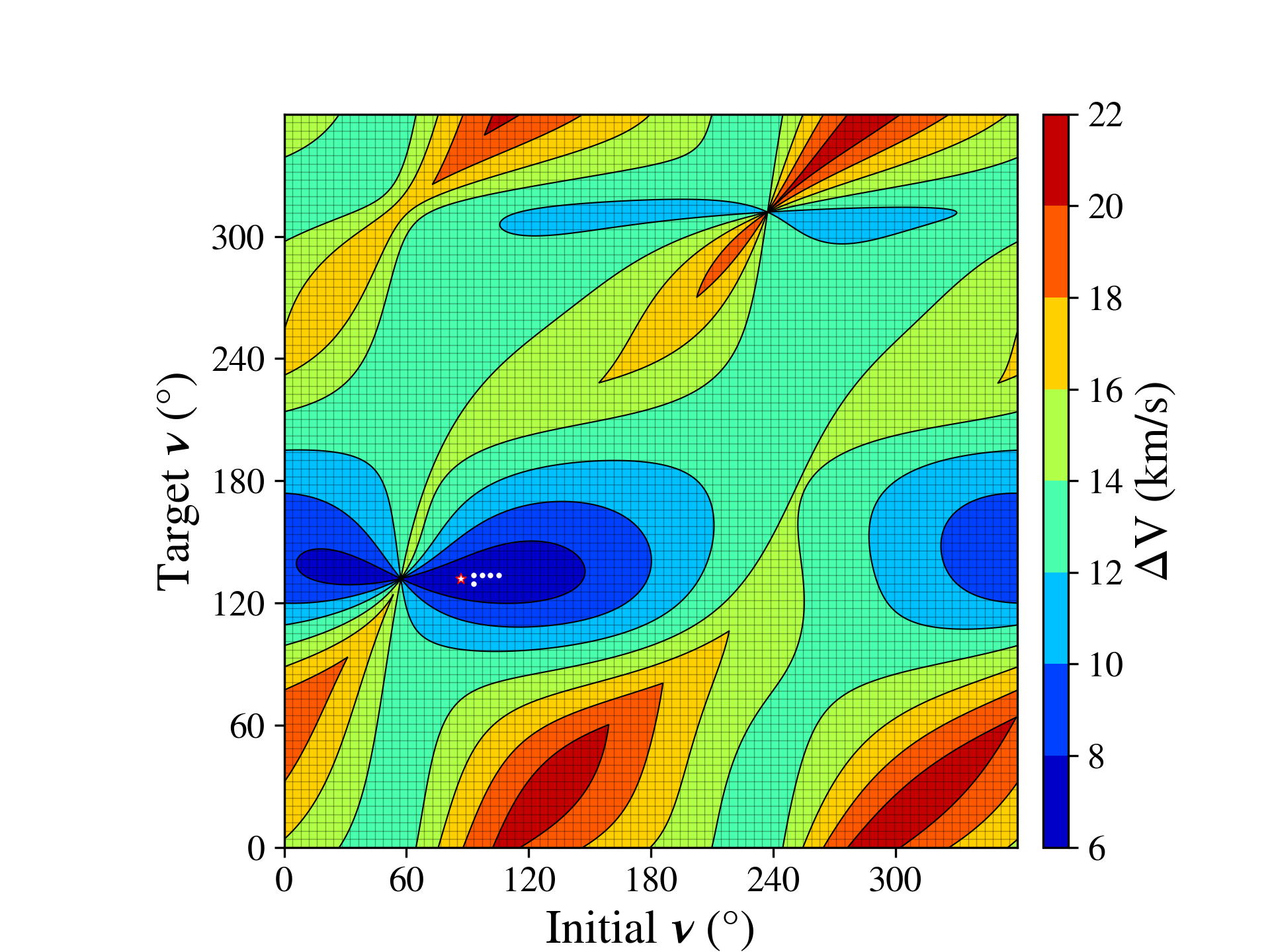}
         \caption{LEO to HEO.}
         \label{fig:ex1_contour}
     \end{subfigure}
     \vfill{}
     \begin{subfigure}{.45\textwidth}
         \centering
         \includegraphics[height=.3\textheight, trim={60 0 75 40},clip]{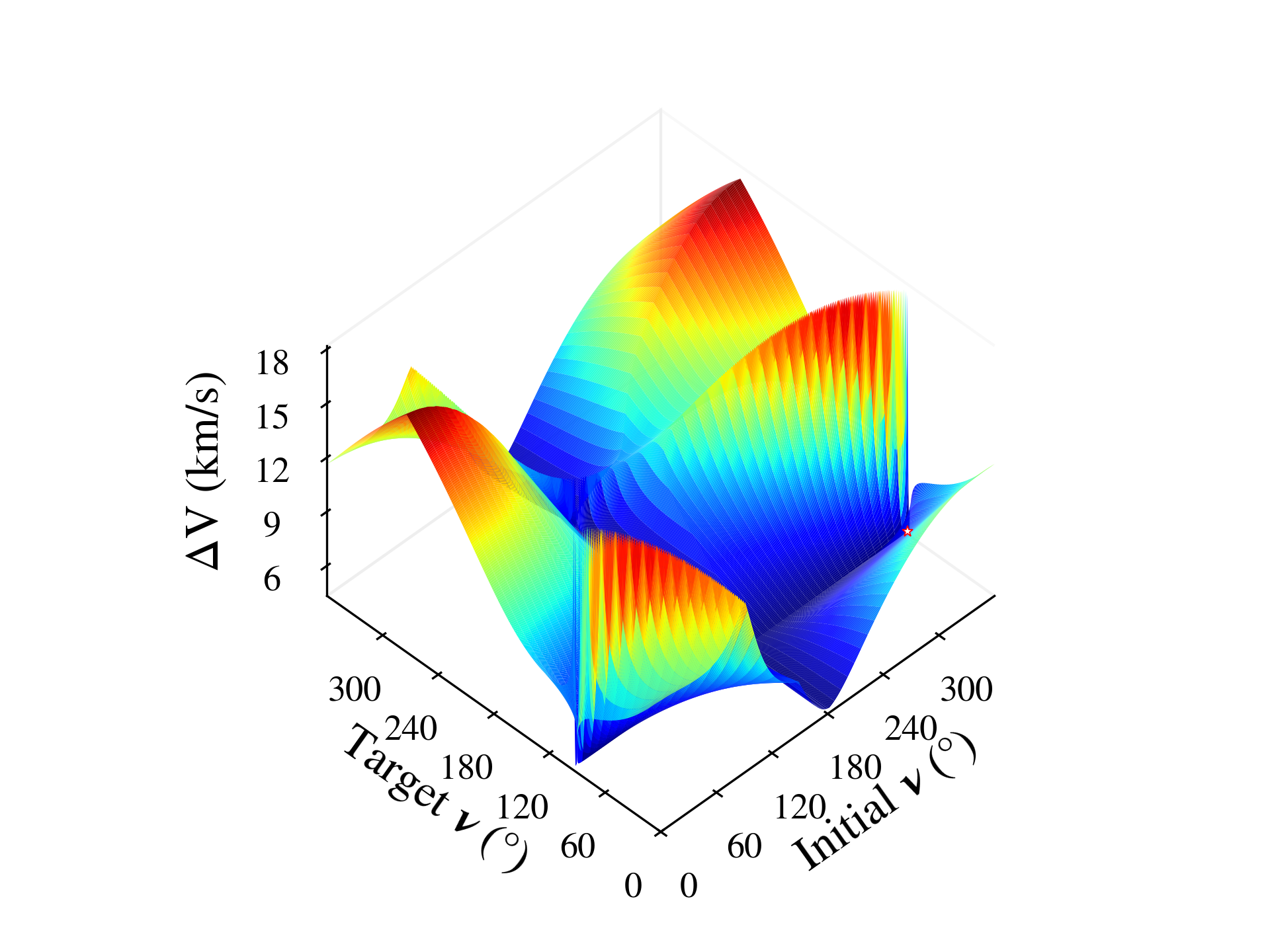}
         \caption{GTO to RGEO.}
         \label{fig:ex2_contour3d}
     \end{subfigure}
     \hfill{}
     \begin{subfigure}{.45\textwidth}
         \centering
         \includegraphics[height=.3\textheight, trim={57 0 37 20},clip]{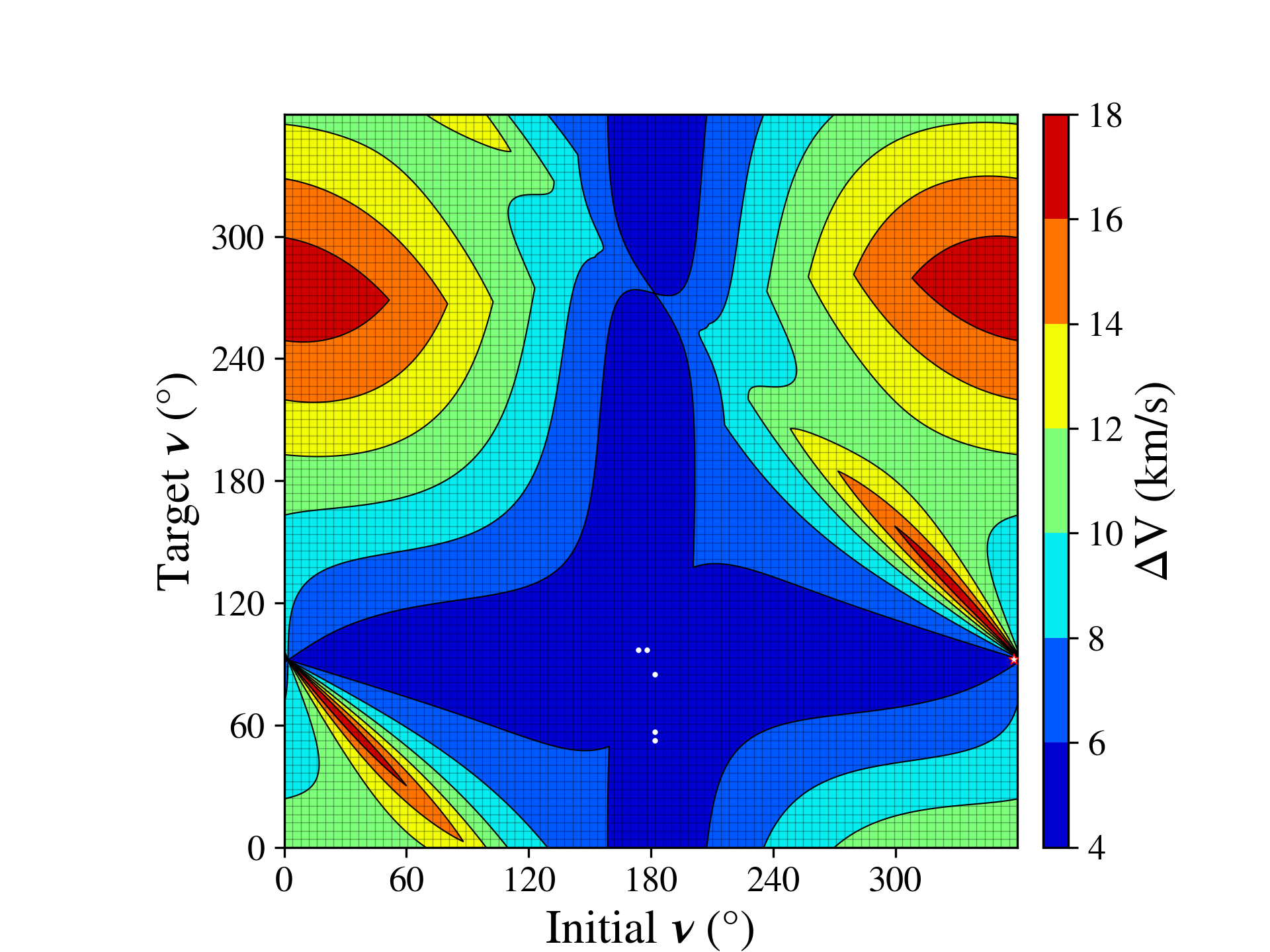}
         \caption{GTO to RGEO.}
         \label{fig:ex2_contour}
     \end{subfigure}
     \vfill{}
     \begin{subfigure}{.45\textwidth}
         \centering
         \includegraphics[height=.3\textheight, trim={60 0 75 40},clip]{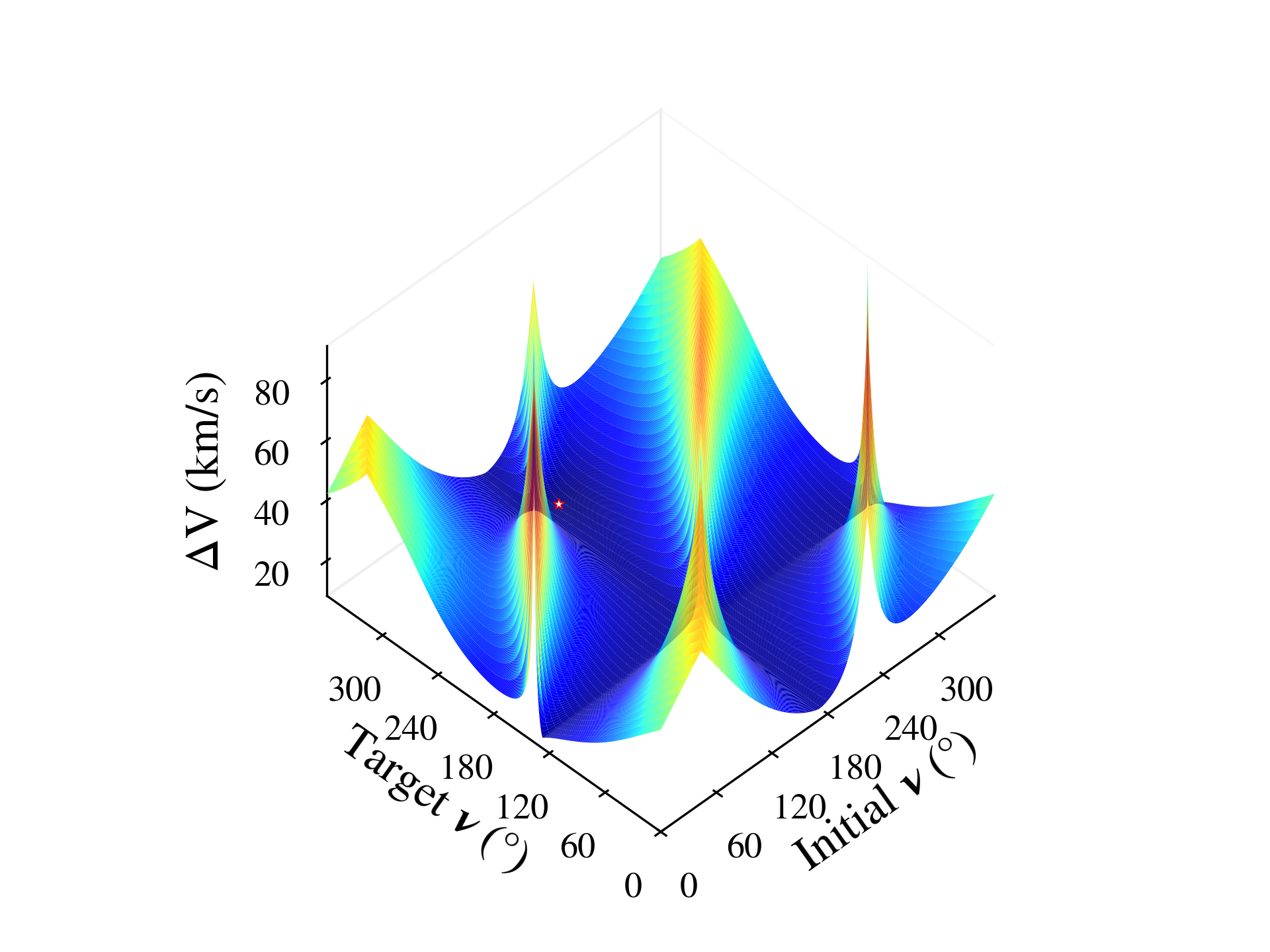}
         \caption{Earth to Dionysus.}
         \label{fig:ex3_contour3d}
     \end{subfigure}
     \hfill{}
     \begin{subfigure}{.45\textwidth}
         \centering
         \includegraphics[height=.3\textheight, trim={57 0 37 20},clip]{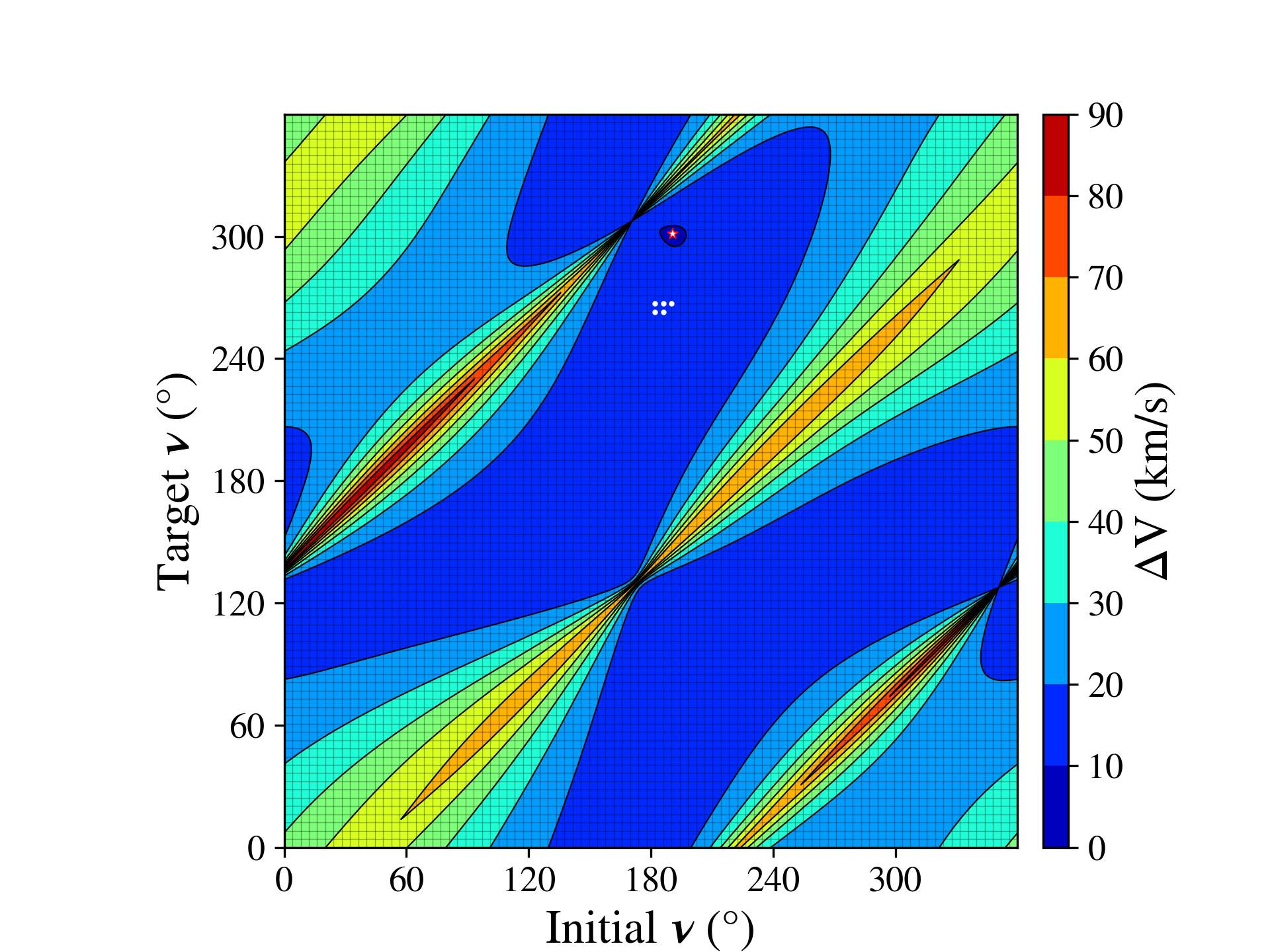}
         \caption{Earth to Dionysus.}
         \label{fig:ex3_contour}
     \end{subfigure}
     \caption{Contours of total $\Delta V$ for initial and target orbit true anomalies of Examples 1 - 3.}
     \label{fig:all_contours}
\end{figure}

\subsubsection{Geosynchronous Transfer Orbit (GTO) to Retrograde Geosynchronous Orbit (RGEO) Transfer}
\label{example:GTO to RGEO}

We next look at a GTO to RGEO transfer, where the orbits are aligned to create a special case where multiple solutions of equally optimal minimum-$\Delta V$ transfers exist \cite{koblick2019robust}, visualized in a contour plots in \cref{fig:ex2_contour}. \cref{tab:ex2_params} gives the position and velocity vectors defining the initial and target orbits from \cite{koblick2019robust}. This is particularly challenging for convergence in the refinement stage of the algorithm. Since we use basin hopping in the evolutionary algorithm, convergence can be slower as there are multiple manifolds of equally optimal solutions.

\begin{table}[h!]
\centering
\caption{Orbit definitions of GTO (initial orbit) and RGEO (target orbit).}
\begin{tabular}{l r r r}
\toprule
\toprule
Parameter & X & Y & Z \\
\hline
Initial position $r_i$ (km) & 4783.85656098 & 4478.04491028 & 74.45791683 \\
Initial velocity $v_i$ (km) & -6.60516350 & 7.11177002 & -3.33974946 \\
\hline
Target position $r_f$ (km) & 30993.40736267 & -28901.81993650 & 0.0 \\
Target velocity $v_f$ (km) & -2.09161279 & -2.24298010 & 0.0 \\
\hline
\end{tabular}
\label{tab:ex2_params}
\end{table}

\begin{table}[h!]
\centering
\caption{Results of minimum-$\Delta V$ solution of GTO to RGEO transfer.}
\begin{tabular}{l r r r}
\toprule
\toprule
Algorithm & Method type & Total $\Delta V$ (km/s) & Compute time (s) \\
\hline
Coarse-to-fine (ellipse) & Sampling \& optimization & 4.600609 & 24.340 \\
Coarse-to-fine (PDE) & Sampling \& optimization & 4.606225 & 11270.0 \\
Refine-only (ellipse) & Optimization & 4.606196 & 4.282 \\
\hline 
\verb|semi-analytic| \cite{koblick2019robust} & Sampling & 4.601186 & 30910.0 \\
\verb|pykep-pl2pl| \cite{izzo2012pygmo} & Optimization & 4.605652 & 0.282 \\
\hline
\end{tabular}
\label{tab:ex2_results}
\end{table}

We observe from \cref{fig:ex2_contour} that the best candidates from the initial coarse sampling stage were found along the vertical manifold near $180^\circ$ true anomaly on the initial orbit, whereas the final converged solution from the refinement converged very far away from these candidates.

 Koblick et al. reported $\Delta V =  4.600604$ km/s for this transfer using the same fine discrete sampling grid \cite{koblick2019robust}. Our solution, which required far fewer computations, has a $\Delta V$ that is $8\times10^{-6}$ km/s higher, which we consider to be sufficiently close given the special case of the existence of two manifolds of equally-optimal $\Delta V$ transfers.

\subsubsection{Earth to Asteroid Dionysus Transfer}
\label{example:Earth to Dionysus}

Lastly, we demonstrate our algorithm on a problem with a largely different scale: a transfer from an Earth orbit to the asteroid Dionysus, which has been analyzed extensively by \cite{saloglu2023existence} in the case of multiple impulse transfers. The position and velocity vectors defining these initial and target orbits are displayed in \cref{tab:ex3_params}.

\begin{table}[h!]
\centering
\caption{Orbit definitions of Earth (initial orbit) and asteroid Dionysus (target orbit).}
\begin{tabular}{l r r r}
\toprule
\toprule
Parameter & X & Y & Z \\
\hline
Initial position $r_i$ (km) & -3637871.081 & 147099798.784 & -2261.441 \\
Initial velocity $v_i$ (km/s) & -30.265 & -0.848 & 5.050 $\times \ 10^{-5}$ \\
\hline
Target position $r_f$ (km) & -302452014.884 & 316097179.632 & 82872290.075 \\
Target velocity $v_f$ (\unit{km/s}) & -4.533 & -13.110 & 0.656 \\
\hline
\end{tabular}

\label{tab:ex3_params}
\end{table}

\begin{table}[h!]
\centering
\caption{Results of minimum-$\Delta V$ solution of Earth to Dionysus transfer.}
\begin{tabular}{l r r r}
\toprule
\toprule
Algorithm & Method type & Total $\Delta V$ (km/s) & Compute time (s) \\
\hline
Coarse-to-fine (ellipse) & Sampling \& optimization & 9.907426 & 42.920\\
Coarse-to-fine (PDE) & Sampling \& optimization & 9.998698 & 10850.0 \\
Refine-only (ellipse) & Optimization & 10.312696 & 24.354 \\
\hline 
\verb|semi-analytic| \cite{koblick2019robust}& Sampling & 9.907427 & 77680.0 \\
\verb|pykep-pl2pl| \cite{izzo2012pygmo} & Optimization & 11.545156 & 0.393 \\

\hline
\end{tabular}
\label{tab:ex3_results}
\end{table}

The optimal total impulse reported previously for this transfer was $\Delta V = 9.907425$ km/s from \cite{saloglu2023existence}, where our proposed method was $1\times10^{-6}$ km/s greater, without changing the parameters from the Earth-centered examples.
\cref{tab:ex3_results} shows that the optimization-based approaches yield a suboptimal result, likely due to converging to a local minimum.

\subsection{Two-Impulse Phase-Free Transfer Maneuvers in Orbits with $J_2$ Perturbation Effects}
\label{sec:J2}
The $J_2$ perturbation results from an approximation of the gravitational effects of a planet that is not perfectly spherical but slightly oblate, with a larger diameter at the equator.
This asymmetry changes the gravitational potential field, resulting in precessing orbits \cite{mcclain2001fundamentals}.
An example orbit modeled with the $J_2$ perturbation can be seen in \cref{fig:J2_all} (blue curve), which shows the orbit precesses because of the perturbation. 
Since the $J_2$ perturbation represents a modification of the potential field, it can be readily incorporated into the geometric framework by defining a new Jacobi metric for this potential field.
The total gravitational potential including the $J_2$ perturbation is given by
\begin{equation}
U(\mathbf{\bs{r}}) = -\frac{GM}{r} \left( 1 - J_2 \left( \frac{R_{\text{body}}}{r} \right)^2 \frac{3z^2 - r^2}{2r^2} \right),
\end{equation}
where $r=\|\bs{r}\|$, $z$ is the vertical component of $\bs{r}$, and $R_\text{body}$ is the radius of the primary body. 
The Jacobi metric with the $J_2$ perturbation then becomes
\begin{equation}
\mathcal{G}(\bs{r})= 2\left( E + \frac{GM}{r} \left( 1 - J_2 \left( \frac{R_{\text{body}}}{r} \right)^2 \frac{3z^2 - r^2}{2r^2} \right) \right) I,
\end{equation}
where $I$ is the $3\times3$ identity matrix.

The geodesics for this metric cannot be computed in closed-form as was done in the non-perturbed case.
Instead, we leveraged the geometric heat flow approach from \cref{eq:pde_christoffel} to compute the geodesics numerically. 
The Christoffel symbols in this case were computed in Mathematica using \cref{eq:christoffel_def} and then imported into Python.
To solve the PDE, a psedeuspectoral approach was implemented in Python with Chebychev polynomials, and SciPy was used to evolve the ODE at each node.
Since there are two ways of transferring around the central body, i.e., two homotopy classes, two geodesics must be computed for each sampling parameter ($E$, $\bs{p}_0$, and $\bs{p}_f$) and during the refinement stage to find the optimal $\Delta V$ transfer. 
Note that this process is more involved than the non-perturbed case because the geodesics are no longer ellipses and must be computed numerically.
Additionally, we found that most of hyperparameters of the coarse-to-fine algorithm used in the non-perturbed case are largely transferable to the $J_2$ case, which shows the method is not hypersensitive to changes in the potential field. 
Of the parameters that had to be changed, the convergence tolerance and generations for evolution were reduced to accommodate the increased computational cost and slight reduction in accuracy of computing geodesics numerically.  
The hyperparameters used are summarized in \cref{tab:tuneparams_J2}.

\begin{table}[h!]
\centering
\caption{Tuned hyperparameters of coarse-to-fine algorithm using PDE.}
\begin{tabular}{l r}
\toprule
\toprule
Parameter & Value \\
\hline
\multicolumn{2}{c}{Coarse sample stage}\\
\hline
Number of orbit periods & 2 \\
Number of position samples per orbit period & 90 \\
Number of energy level samples per position sample & 3 \\
Number of best to save for refinement & 5 \\
Multiplier on $E_{min}$ for upper bound & 2 \\
Number of Chebyshev polynomial nodes & 25 \\
\hline
\multicolumn{2}{c}{Evolutionary algorithm refinement stage}\\
\hline
Number of Chebyshev polynomial nodes & 30 \\
Generations of evolution & 100 \\
Optimization algorithm & CMA-ES \\
Convergence tolerance & $10^{-10}$ \\
Use generalized MBH & True \\
MBH maximum step size & 0.05 \\
\hline
\end{tabular}
\label{tab:tuneparams_J2}
\end{table}

\subsubsection{$J_2$ - Perturbed Jupiter Orbit to Io Altitude Transfer}
\label{example:Jupiter to Io}

The method outlined above was tested on an orbit transfer around Jupiter as its $J_2$ effects are very prominent.
The parameters for the initial and target orbits are presented in \cref{tab:jupiterex_traj}.
The initial orbit was chosen because of the large influence of the $J_2$ perturbation, while the target orbit is the altitude of Jupiter's moon Io.
The initial orbit was propagated for two periods; as the orbit is precessing, one period no longer captures the full geometry. 
More periods can be considered at the expense of a quadratically increasing number of geodesics to compute.
As seen in \cref{fig:J2_all}, the initial orbit is highly affected by the $J_2$ perturbation, so methods that do not account for the perturbation would not be effective.
For example, if one tried to use the optimal two-impulse trajectory not accounting for the $J_2$ perturbation shown as the green line in \cref{fig:exJ2_traj}, the trajectory misses the target orbit by $1000$ km making it infeasible for a two impulse transfer.
The convergence of the method to the optimal geodesic (transfer orbit) is shown in \cref{fig:exJ2_geo}.
To find the other geodesic using numerical methods, the solver was initialized with an initial guess containing a point reflected across the origin.
The reflected initial guess converges to another geodesic as shown in \cref{fig:exJ2_geo_refl}.
An initial and reflected geodesic was found for all points as sometimes the $\Delta V$ from the other geodesic is lower than that of the initial geodesic.
We computed the minimum $\Delta V$ to transfer between the initial and target orbits to be $\Delta V =9.045339$ km/s. The transfer trajectory is shown as the red line in \cref{fig:exJ2_traj}.
For this entire process, we only used the perturbed geodesics and did not need to use the standard Keplerian system for any computations, demonstrating the generality and effectiveness of our approach.

\begin{figure}[t!]
    \begin{subfigure}{.3\textwidth}
         \centering
         \includegraphics[width=\textwidth, trim={100 35 65 40},clip]{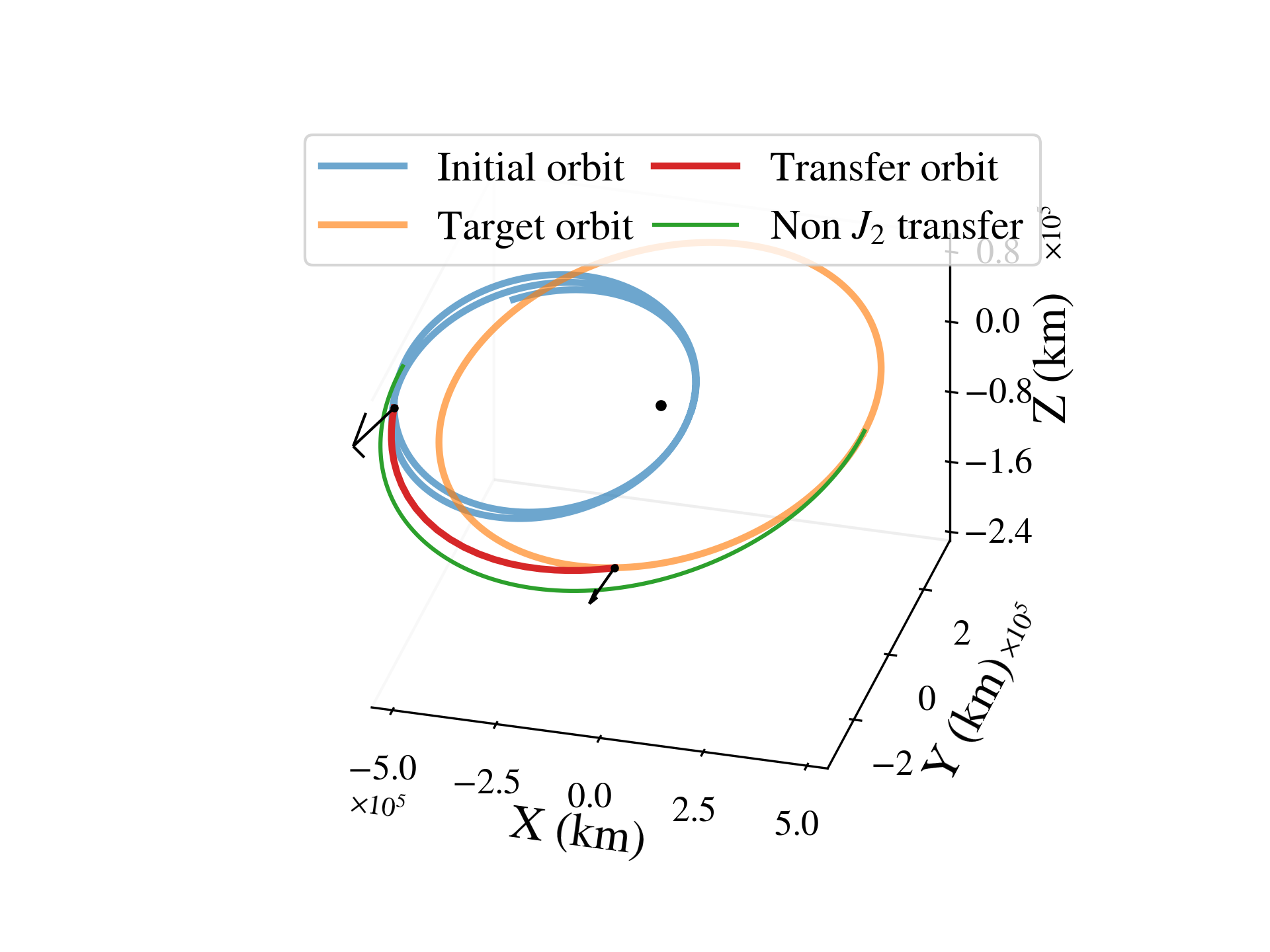}
         \caption{J2-Perturbed Jupiter Orbit to Io altitude transfer.}
         \label{fig:exJ2_traj}
     \end{subfigure}
     \hfill{}
     \begin{subfigure}{.3\textwidth}
         \centering
         \includegraphics[width=\textwidth, trim={100 35 65 40},clip]{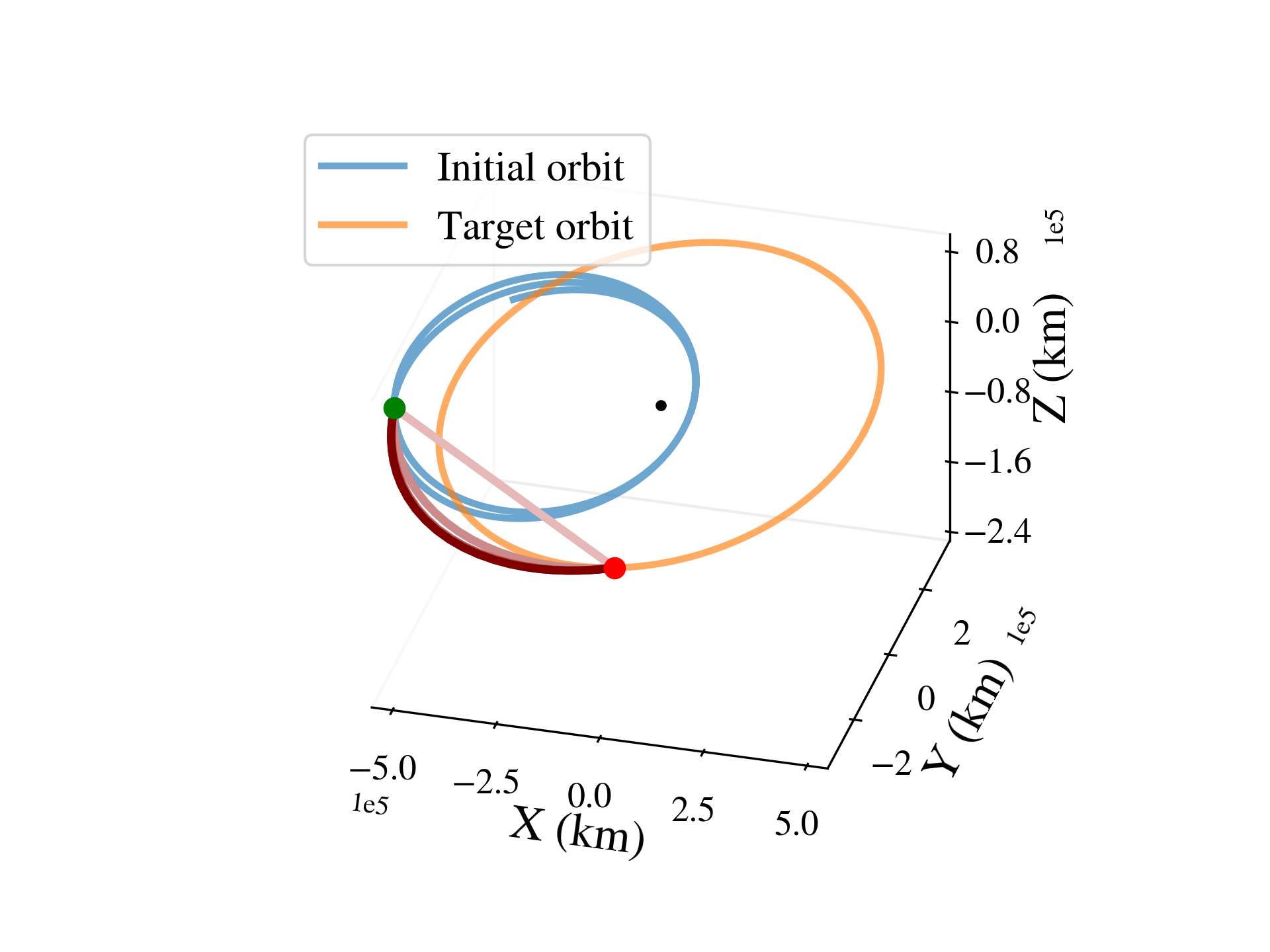}
         \caption{Geodesic solver converging to one geodesic for the same transfer.}
         \label{fig:exJ2_geo}
     \end{subfigure}
     \hfill{}
     \begin{subfigure}{.3\textwidth}
         \centering
         \includegraphics[width=\textwidth, trim={100 35 65 40},clip]{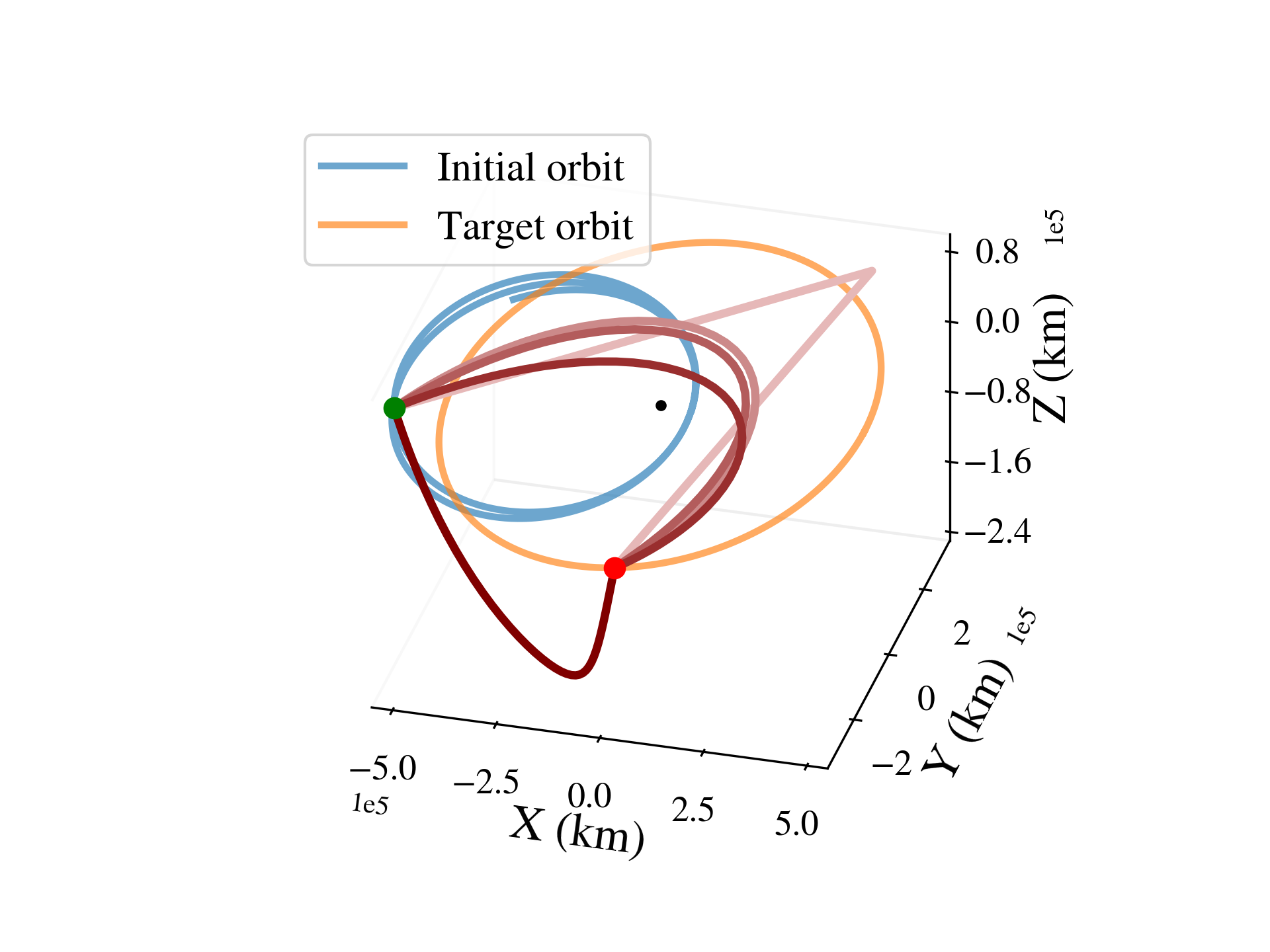}
         \caption{Geodesic solver converging to another geodesic for the same transfer.}
         \label{fig:exJ2_geo_refl}
     \end{subfigure}
     \caption{Transfer trajectory showing the geodesic solving progression}
     \label{fig:J2_all}
\end{figure}

\begin{table}[h!]
\centering
\caption{Definitions of orbit about Jupiter (initial orbit) and orbit about Jupiter at Io's altitude (target orbit).}
\begin{tabular}{l r r r}
\toprule
\toprule
Parameter & X & Y & Z \\
\hline
Initial position $r_i$ (km) & 75000.0 & 0.0 & 0.0 \\
Initial velocity $v_i$ (km/s) & 0.0 & 53.261749 & 14.271442 \\
\hline
Target position $r_f$ (km) & 489943.356 & 0.0 & 0.0 \\
Target velocity $v_f$ (km/s) & 0.0 & 16.112383 & 1.406071$ \times10^{-2}$ \\
\hline
\end{tabular}
\label{tab:jupiterex_traj}
\end{table}

\section{Conclusion}
We presented a new framework for designing two-impulse trajectories in restricted two-body systems using a geometric framework.
By taking a geometric approach, we were able to avoid issues of sensitivity in the initial guess inherent to other approaches, while being able to handle a broader class transfer problems, e.g., $J_2$ perturbations. 
We demonstrated the framework both on the well-studied bi-impulsive Keplerian system with examples from prior studies, where the new framework matches or exceeds the performance of other methods.
We then demonstrated the framework on a two-impulse phase-free transfer between $J_2$-perturbed orbits around Jupiter to show how the framework can be applied to more complex problems without relying on a prior solution of the simpler problem.
Moreover, minimal adjustments were needed for applying the method to the $J_2$ case, showcasing the framework's ability to handle more complicated systems.
Future work includes extending the geometric framework to more complex scenarios, such as the restricted three-body problem.
Since a Jacobi metric cannot be defined in this setting, such an extension would require identifying an appropriate alternative metric and developing a corresponding geodesic solver.
Investigating sampling regimes for multi-impulse maneuvers in multi-body systems is also of interest.

\section{Appendix}
\subsection{Derivation of Jacobi metric for system with natural Lagrangian}
\label{sec:appendix_derivation_jacobi}
Here, we show the derivation of the Jacobi metric for a system with a natural Lagrangian.

\begin{theorem}
  Let the Lagrangian of a mechanical system with configuration coordinates $\bs{q}$ be defined as
  \[
    L\bigl(\bs{q},\dot{\bs{q}} \bigr)
    = \frac12\,\dot{\bs{q}}^{\top}\,\mathcal{Q}(\bs{q})\,\dot{\bs{q}}
      -V(\bs{q}),
  \]
  where $\mathcal{Q}$ is the mass tensor and $V$ the potential. 
  For fixed energy, $E$, the trajectories of this Lagrangian system coincide with the geodesics of the Jacobi metric, $\mathcal{G}(\bs{q}) = 2\,(E - V(\bs{q}))\,\mathcal{Q}(\bs{q})$.
\end{theorem}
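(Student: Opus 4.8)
The plan is to establish the equivalence through the classical route of Maupertuis' least-action principle, showing that both the physical trajectories and the Jacobi geodesics are critical points of the \emph{same} reduced action functional on the fixed-energy hypersurface. First I would recall that, by the Euler--Lagrange equations for the natural Lagrangian $L = \tfrac12 \dot{\bs{q}}^\top \mathcal{Q}(\bs{q})\dot{\bs{q}} - V(\bs{q})$, the total energy $E = \tfrac12 \dot{\bs{q}}^\top \mathcal{Q}(\bs{q})\dot{\bs{q}} + V(\bs{q})$ is conserved along solutions. Hamilton's principle says the physical trajectories extremize $\int L\, dt = \int \bigl(\tfrac12 \dot{\bs{q}}^\top \mathcal{Q}\dot{\bs{q}} - V\bigr) dt$; using $L = \tfrac12 \dot{\bs{q}}^\top\mathcal{Q}\dot{\bs{q}} - V = 2\cdot\tfrac12 \dot{\bs{q}}^\top\mathcal{Q}\dot{\bs{q}} - E$ on the energy surface, and discarding the constant $-E\int dt$ term under a fixed-endpoint, fixed-time-interval variation, one reduces (via the Maupertuis argument that trades time variation for the energy constraint) to extremizing the abbreviated action $\int \dot{\bs{q}}^\top \mathcal{Q}(\bs{q})\dot{\bs{q}}\, dt = \int \sqrt{\dot{\bs{q}}^\top \mathcal{Q}\dot{\bs{q}}}\,\sqrt{\dot{\bs{q}}^\top\mathcal{Q}\dot{\bs{q}}}\,dt$.

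The second step is to rewrite that abbreviated action purely geometrically. On the energy surface, $\tfrac12 \dot{\bs{q}}^\top\mathcal{Q}\dot{\bs{q}} = E - V(\bs{q})$, so $\sqrt{\dot{\bs{q}}^\top\mathcal{Q}\dot{\bs{q}}} = \sqrt{2(E-V)}$, and $\sqrt{\dot{\bs{q}}^\top\mathcal{Q}\dot{\bs{q}}}\,dt = \sqrt{d\bs{q}^\top \mathcal{Q}\, d\bs{q}}$ is the $\mathcal{Q}$-arclength element. Hence the abbreviated action becomes $\int \sqrt{2(E-V(\bs{q}))}\,\sqrt{d\bs{q}^\top \mathcal{Q}\,d\bs{q}} = \int \sqrt{d\bs{q}^\top\, \bigl(2(E-V)\mathcal{Q}\bigr)\, d\bs{q}} = \int \sqrt{d\bs{q}^\top \mathcal{G}(\bs{q})\, d\bs{q}} = \mathcal{L}_{\mathcal{G}}(\bs{q}(\cdot))$, which is exactly the Riemannian length functional of the Jacobi metric $\mathcal{G} = 2(E-V)\mathcal{Q}$ as defined in the preliminaries. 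Since minimizing geodesics of $\mathcal{G}$ are by definition the extremals of this length functional (equivalently of the energy functional $\mathcal{E}_{\mathcal{G}}$, which have the same unparametrized critical curves), the physical trajectories and the Jacobi geodesics trace out the same curves in configuration space.

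The third step handles the reparametrization carefully, since this is where the argument is most delicate and is the main obstacle: the physical trajectory is parametrized by time $t$, whereas a geodesic is naturally parametrized by $\mathcal{G}$-arclength, so the claimed ``coincidence'' is of \emph{images}, not of parametrized curves. I would make this precise by exhibiting the change of variable $ds/dt$ relating the two parametrizations — consistent with the relation $t(s) = \int_0^s \bigl(2(E-V)\bigr)^{-1}\,d\sigma$ already stated in the paper — and verifying that a curve satisfying the geodesic equation $\nabla_{\bs{\gamma}_s}\bs{\gamma}_s = 0$ for $\mathcal{G}$, after this reparametrization, satisfies the Euler--Lagrange equations of $L$ with energy $E$, and conversely. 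A clean way to close this is to invoke the standard fact that the unparametrized extremals of $\int \sqrt{d\bs{q}^\top\mathcal{G}\,d\bs{q}}$ are precisely the unparametrized geodesics of $\mathcal{G}$, so that matching the abbreviated-action variational problem to the length variational problem suffices, with the explicit time reparametrization recovering the physical solution from the geodesic. One should also note the standing assumption $E > V(\bs{q})$ on the region of interest, so that $\mathcal{G}$ is positive definite and the Jacobi metric is a genuine Riemannian metric; at the Hill boundary $E = V$ the metric degenerates and the correspondence breaks down, which is consistent with the remarks in the paper about disconnected accessible regions at low energy.
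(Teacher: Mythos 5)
Your proposal follows essentially the same route as the paper's proof: both pass through Maupertuis' principle, rewrite the abbreviated action on the fixed-energy surface as the length functional of $\mathcal{G}(\bs{q}) = 2(E-V(\bs{q}))\,\mathcal{Q}(\bs{q})$, and identify its extremals with the fixed-energy trajectories. Your third step --- making the time-versus-arclength reparametrization explicit and noting that $E > V(\bs{q})$ is required for $\mathcal{G}$ to be a genuine Riemannian metric --- is somewhat more careful than the paper's argument, which drops the $-E(t_f - t_0)$ term without comment and leaves both points implicit, but the underlying approach is identical.
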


\begin{proof}
The standard action is
\[
S
= \int_{t_0}^{t_f} L\bigl(\bs{q}(t),\dot{\bs{q}}(t)\bigr)\,dt
= \int_{t_0}^{t_f}
  \Bigl(\tfrac12\,\dot{\bs{q}}^{\top}\,\mathcal{Q}(\bs{q})\,\dot{\bs{q}} - V(\bs{q})\Bigr)\,dt.
\]
By the Legendre transform,
\[
\bs{\mathfrak{p}}
= \frac{\partial L}{\partial \dot{\bs{q}}}
= \mathcal{Q}(\bs{q})\,\dot{\bs{q}},
\]
the Hamiltonian becomes
\[
H\bigl(\bs{q},\bs{\mathfrak{p}}\bigr)
= \bs{\mathfrak{p}}^{\top}\dot{\bs{q}} - L(\bs q,\dot{\bs q})
= \tfrac12\,\dot{\bs q}^{\top}\,\mathcal{Q}(\bs{q})\,\dot{\bs q} + V(\bs q),
\]
which is constant for a fixed energy level $E$.  Hence
\[
L(\bs{\mathfrak{p}},\dot{\bs q}) = \bs{\mathfrak{p}}^{\top}\dot{\bs q} - E
\quad\Longrightarrow\quad
S = \int_{t_0}^{t_f}\bigl(\bs{\mathfrak{p}}^{\top}\dot{\bs q} - E\bigr)\,dt.
\]
Dropping the constant term $-E(t_f - t_0)$, we get the abbreviated action
\[
S_{\mathrm{abbr}}
= \int_{t_0}^{t_f}\bs{\mathfrak{p}}^{\top}\dot{\bs q}\,dt
= \int_{p_0}^{p_f}\bs{\mathfrak{p}}^{\top}\,d\bs q,
\]
since $d\bs q = \dot{\bs q}\,dt$.
By energy conservation,
\[
\dot{\bs q}^{\top}\,\mathcal{Q}(\bs q)\,\dot{\bs q} = 2(E - V(\bs q)),
\]
so
\[
\bs{\mathfrak{p}}^{\top}\,d \bs q
= \dot{\bs q}^{\top}\,\mathcal{Q}(\bs q)\,\dot{\bs q}\,dt
= \sqrt{\dot{\bs q}^{\top}\,\mathcal{Q}(\bs q)\,\dot{\bs q}}
  \sqrt{2(E - V(\bs q))}\,dt.
\]
Now re-parameterize the curve by $s\in[0,1]$ so that
\[
\bs q(0)=\bs p_0,\quad \bs q(1)=\bs p_f,\quad
\bs q_s=\frac{d\bs q}{ds}(s),\quad
d\bs q=\bs q_s(s)\,ds.
\]
Thus
\[
S_{\mathrm{abbr}}
= \int_{\bs p_0}^{\bs p_f}\bs{\mathfrak{p}}^{\top}\,d\bs q
= \int_{0}^{1}\bs q_s(s)^{\top}\,\mathcal{Q}(\bs q(s))\, \bs q_s(s)ds
= \int_{0}^{1}
  \sqrt{\bs q_s(s)^{\top}\,\mathcal{Q}(\bs q)\,\bs q_s(s)}
  \sqrt{2(E - V(\bs q(s)))}ds.
\]
Finally, looking at the Jacobi metric
\[
\mathcal{G}(\bs q) = 2\,(E - V(\bs q))\,\mathcal{Q}(\bs q),
\]
Then
\[
S_{\mathrm{abbr}}
= \int_{0}^{1} \sqrt{\bs q_s(s)^{\top}\,\mathcal{G}(\bs q(s))\, \bs q_s(s)}ds,
\]
which is exactly the length functional for metric $\mathcal{G}$. 
Because geodesics minimize this functional, they are fixed-energy trajectories of the original system.
\end{proof}

\subsection{Planetary Parameters}
\label{sec:appendix_planetary_parameters}
Parameters for Earth and Jupiter, used in \cref{example:LEO to HEO}, \cref{example:GTO to RGEO}, \cref{example:Earth to Dionysus}, and \cref{example:Jupiter to Io} are given below, referenced from \cite{williams2024jupiter, williams2024sun, williams2024earth}.
\begin{table}[ht]
\centering
\cprotect\caption{Planetary Parameters}
\begin{tabular}{lll}
\toprule
\toprule
& Parameter & Value \\
\hline
\multirow{2}{*}{\rotatebox[origin=c]{90}
{Sun}} & GM (km$^{3}$/s$^{2}$)& \num{1.327e+11} \\ & $R_{body}$ (km) & \num{6.95e5} \\
\hline
\multirow{2}{*}{\rotatebox[origin=c]{90}
{Earth}} & GM (km$^{3}$/s$^{2}$)& \num{3.986e+5} \\ & $R_{body}$ (km) & 6378.0 \\
\hline
\multirow{3}{*}{\rotatebox[origin=c]{90}
{Jupiter}} & GM (km$^{3}$/s$^{2}$) & \num{126.687e+6} \\ & $J_2$ & \num{1.475e-2} \\ & $R_{body}$ (km)& 69911.0 \\
\end{tabular}
\label{tab:space_parameters}
\end{table}

\subsection{Tuned parameters of other algorithms}
\label{sec:appendix_other_parameters}

The algorithm from Koblick et al. (\verb|semi-analytic|) was implemented in Python and initialized with the same parameters as presented in \cite{koblick2019robust}. The fine discrete grid search of 0.1° spacing along true anomaly $\nu$ requires 12,960,000 vector pairs of positions
to sample over, which incurs a significant runtime penalty as implemented in Python. 

\begin{table}[!h]
\centering
\cprotect\caption{Tuned hyperparameters of \verb|semi-analytic|}
\begin{tabular}{l r}
\toprule
\toprule
Parameter & Value \\
\hline
Number of position samples per orbit & 3600 \\
Convergence tolerance & $10^{-12}$ \\
Singularity tolerance & $10^{-3}$ \\
\hline
\end{tabular}
\label{tab:tuneparams_semianalytic}
\end{table}

The algorithm from Izzo et al. (\verb|pykep-pl2pl|) is taken from the PyKEP software package \cite{izzo2012pygmo}. Modifications to \verb|pl2pl_N_impulses| to enable more robust performance include fixing a bug with the logic for phase-free maneuvers and adding another element to the optimization decision vector that determines whether the transfer trajectory is prograde or retrograde, as we have found that the previous calculation did not always select the most optimal direction. The hyperparameters that we have tuned are consistent with the refine-only algorithm we tested \ref{tab:tuneparams_gmo}.

\begin{table}[h!]
\centering
\cprotect\caption{Tuned hyperparameters of \verb|pykep-pl2pl|}
\begin{tabular}{l r}
\toprule
\toprule
Parameter & Value \\
\hline
Population size to evolve & 50 \\
Generations of evolution & 5000 \\
Optimization algorithm & CMA-ES \\
Convergence tolerance & $10^{-12}$ \\
Use generalized MBH & False \\
\hline
\end{tabular}
\label{tab:tuneparams_pykep}
\end{table}

\FloatBarrier
\bibliography{sample}

\begin{thebibliography}{28}
\newcommand{\enquote}[1]{``#1''}
\providecommand{\natexlab}[1]{#1}
\providecommand{\url}[1]{\texttt{#1}}
\providecommand{\urlprefix}{URL }
\expandafter\ifx\csname urlstyle\endcsname\relax
  \providecommand{\doi}[1]{\discretionary{}{}{}https://doi.org/#1}\else
  \providecommand{\doi}[1]{\discretionary{}{}{}\urlstyle{rm}\url{https://doi.org/#1}}\fi

\bibitem[{Shirazi et~al.(2018)Shirazi, Ceberio, and Lozano}]{shirazi2018spacecraft}
Shirazi, A., Ceberio, J., and Lozano, J.~A., \enquote{Spacecraft trajectory optimization: A review of models, objectives, approaches and solutions,} \emph{Progress in Aerospace Sciences}, Vol. 102, 2018, pp. 76--98.

\bibitem[{Walsh and Peck(2020)}]{walsh2020survey}
Walsh, M.~T., and Peck, M.~A., \enquote{Survey of Methods for Calculating Impulsive $\Delta$V-Minimizing Orbit Transfer Maneuvers,} \emph{AIAA Scitech 2020 Forum}, 2020, p. 1923.

\bibitem[{Saloglu and Taheri(2025)}]{saloglu2025impulsive}
Saloglu, K., and Taheri, E., \enquote{Impulsive to Low-Thrust Mapping for Optimal Trajectories With Thrust Magnitude Continuation,} \emph{AIAA SCITECH 2025 Forum}, 2025, p. 0532.

\bibitem[{Hohmann(1925)}]{hohmann1925erreichbarkeit}
Hohmann, W., \enquote{Die Erreichbarkeit der Himmelsk{\"o}rper: Untersuchungen {\"u}ber das Raumfahrtproblem,} , 1925.

\bibitem[{McCue(1963)}]{mccue1963optimum}
McCue, G.~A., \enquote{Optimum two-impulse orbital transfer and rendezvous between inclined elliptical orbits,} \emph{AIAA Journal}, Vol.~1, No.~8, 1963, pp. 1865--1872.

\bibitem[{Koblick and Xu(2019)}]{koblick2019robust}
Koblick, D., and Xu, S., \enquote{A robust semi-analytic method to solve the minimum $\Delta$V two-impulse rendezvous problem,} \emph{Advances in Space Research}, Vol.~63, No.~10, 2019, pp. 3358--3368.

\bibitem[{Betts(1998)}]{betts1998survey}
Betts, J.~T., \enquote{Survey of numerical methods for trajectory optimization,} \emph{Journal of guidance, control, and dynamics}, Vol.~21, No.~2, 1998, pp. 193--207.

\bibitem[{Kaptui~Sipowa and McMahon(2022)}]{kaptui2022fuel}
Kaptui~Sipowa, H., and McMahon, J., \enquote{Fuel-optimal geometric path planning algorithm for spacecraft formation flying,} \emph{Journal of Guidance, Control, and Dynamics}, Vol.~45, No.~10, 2022, pp. 1862--1872.

\bibitem[{Adu et~al.(2025)Adu, Chuquiure, Zhang, and Vasudevan}]{adu2025bring}
Adu, C.~E., Chuquiure, C. E.~R., Zhang, B., and Vasudevan, R., \enquote{Bring the Heat: Rapid Trajectory Optimization with Pseudospectral Techniques and the Affine Geometric Heat Flow Equation,} \emph{IEEE Robotics and Automation Letters}, 2025.

\bibitem[{Jaquier and Asfour(2022)}]{jaquier2022riemannian}
Jaquier, N., and Asfour, T., \enquote{Riemannian geometry as a unifying theory for robot motion learning and control,} \emph{The International Symposium of Robotics Research}, Springer, 2022, pp. 395--403.

\bibitem[{Leung and Manchester(2017)}]{leung2017nonlinear}
Leung, K., and Manchester, I.~R., \enquote{Nonlinear stabilization via control contraction metrics: A pseudospectral approach for computing geodesics,} \emph{2017 American Control Conference (ACC)}, IEEE, 2017, pp. 1284--1289.

\bibitem[{Ma et~al.(2017)Ma, Xu, and Liang}]{ma2017global}
Ma, H., Xu, S., and Liang, Y., \enquote{Global optimization of fuel consumption in J2 rendezvous using interval analysis,} \emph{Advances in Space Research}, Vol.~59, No.~6, 2017, pp. 1577--1598.

\bibitem[{Zhou et~al.(2013)Zhou, Yan, and Huang}]{zhou2013optimal}
Zhou, Y., Yan, Y., and Huang, X., \enquote{Optimal two-impulse rendezvous on perturbed orbit via genetic algorithm,} \emph{2013 5th International Conference on Intelligent Human-Machine Systems and Cybernetics}, Vol.~1, IEEE, 2013, pp. 305--309.

\bibitem[{Do~Carmo(2016)}]{do2016differential}
Do~Carmo, M.~P., \emph{Differential geometry of curves and surfaces: revised and updated second edition}, Courier Dover Publications, 2016.

\bibitem[{Pin(1975)}]{pin1975curvature}
Pin, O.~C., \enquote{Curvature and mechanics,} \emph{Advances in Mathematics}, Vol.~15, No.~3, 1975, pp. 269--311.

\bibitem[{Arnol'd(2013)}]{arnol2013mathematical}
Arnol'd, V.~I., \emph{Mathematical methods of classical mechanics}, Vol.~60, Springer Science \& Business Media, 2013.

\bibitem[{Yourgrau and Mandelstam(2012)}]{yourgrau2012variational}
Yourgrau, W., and Mandelstam, S., \emph{Variational principles in dynamics and quantum theory}, Courier Corporation, 2012.

\bibitem[{Koon et~al.(2000)Koon, Lo, Marsden, and Ross}]{koon2000dynamical}
Koon, W.~S., Lo, M.~W., Marsden, J.~E., and Ross, S.~D., \enquote{Dynamical systems, the three-body problem and space mission design,} \emph{Equadiff 99: (In 2 Volumes)}, World Scientific, 2000, pp. 1167--1181.

\bibitem[{Chou and Zhu(2001)}]{chou2001curve}
Chou, K.-S., and Zhu, X.-P., \emph{The curve shortening problem}, Chapman and Hall/CRC, 2001.

\bibitem[{Liu et~al.(2019)Liu, Fan, and Belabbas}]{liu2019affine}
Liu, S., Fan, Y., and Belabbas, M.-A., \enquote{Affine geometric heat flow and motion planning for dynamic systems,} \emph{IFAC-PapersOnLine}, Vol.~52, No.~16, 2019, pp. 168--173.

\bibitem[{Jost(2008)}]{jost2008riemannian}
Jost, J., \emph{Riemannian geometry and geometric analysis}, Vol. 42005, Springer, 2008.

\bibitem[{McClain and Vallado(2001)}]{mcclain2001fundamentals}
McClain, W., and Vallado, D., \emph{Fundamentals of Astrodynamics and Applications}, Space Technology Library, Springer Netherlands, 2001.

\bibitem[{Biscani and Izzo(2020)}]{biscani2020pagmo}
Biscani, F., and Izzo, D., \enquote{A parallel global multiobjective framework for optimization: pagmo,} \emph{Journal of Open Source Software}, Vol.~5, No.~53, 2020, p. 2338.

\bibitem[{Izzo(2012)}]{izzo2012pygmo}
Izzo, D., \enquote{PyGMO and PyKEP: Open source tools for massively parallel optimization in astrodynamics (the case of interplanetary trajectory optimization),} \emph{Proceedings of the Fifth International Conference on Astrodynamics Tools and Techniques, ICATT}, 2012.

\bibitem[{Saloglu et~al.(2023)Saloglu, Taheri, and Landau}]{saloglu2023existence}
Saloglu, K., Taheri, E., and Landau, D., \enquote{Existence of infinitely many optimal iso-impulse trajectories in two-body dynamics,} \emph{Journal of Guidance, Control, and Dynamics}, Vol.~46, No.~10, 2023, pp. 1945--1962.

\bibitem[{Williams(2024{\natexlab{a}})}]{williams2024jupiter}
Williams, D.~R., \enquote{{Jupiter Fact Sheet},} \url{https://nssdc.gsfc.nasa.gov/planetary/factsheet/jupiterfact.html}, Oct 2024{\natexlab{a}}.
\newblock Accessed: May 15, 2025.

\bibitem[{Williams(2024{\natexlab{b}})}]{williams2024sun}
Williams, D.~R., \enquote{{Sun Fact Sheet},} \url{https://nssdc.gsfc.nasa.gov/planetary/factsheet/sunfact.html}, May 2024{\natexlab{b}}.
\newblock Accessed: May 15, 2025.

\bibitem[{Williams(2024{\natexlab{c}})}]{williams2024earth}
Williams, D.~R., \enquote{{Earth Fact Sheet},} \url{https://nssdc.gsfc.nasa.gov/planetary/factsheet/earthfact.html}, Nov 2024{\natexlab{c}}.
\newblock Accessed: May 15, 2025.

\end{thebibliography}

\end{document}